\newtheorem{Theorem}{Theorem}
\newtheorem{Example}[Theorem]{Example}
\newtheorem{Remark}[Theorem]{Remark}
\newtheorem{Lemma}[Theorem]{Lemma}
\newtheorem{Corollary}[Theorem]{Corollary}	
\newtheorem{Construction}[Theorem]{Construction}
\newcommand{\SSS}{{\ensuremath{\mathcal{S}}}} 
\newcommand{\PPP}{{\ensuremath{\mathcal{P}}}}
\newcommand{\YYY}{{\ensuremath{\mathcal{Y}}}} 
\newcommand{\ZZZ}{{\ensuremath{\mathcal{Z}}}} 
\newcommand{\XXX}{{\ensuremath{\mathcal{X}}}} 
\newcommand{\Pro}{{\ensuremath{\mathsf{Pr}}}}
\begin{document}
\title{Constructions and bounds for codes with restricted overlaps}

\author{%
Simon~R.~Blackburn,~\IEEEmembership{Senior Member,~IEEE,}
Navid~Nasr~Esfahani,~\IEEEmembership{Member,~IEEE,}
Donald~L.~Kreher,
and~Douglas~R.~Stinson
\thanks{S.~R.~Blackburn is with the
Department of Mathematics, Royal Holloway, 
University of London, Egham, Surrey TW20 0EX,United Kingdom}
\thanks{N.~N.~Esfahani is with the
Department of Computer Science, Memorial University of Newfoundland, St. John's, NL A1B 3X5, Canada}
\thanks{D.~L.~Kreher is with the
Department of Mathematical Sciences,
Michigan Technological University,
Houghton, MI 49931-1295, U.S.A.}
\thanks{D.~R.~Stinson is with the
David R.~Cheriton School of Computer Science, University of Waterloo,  Waterloo ON, N2L 3G1, Canada}
\thanks{D.~R.~Stinson is also with the 
School of Mathematics and Statistics,
Carleton University,
Ottawa, Ontario, K1S 5B6, Canada}
\thanks{D.~R.~Stinson's research is supported by  NSERC discovery grant RGPIN-03882.}}%


\markboth{IEEE Transactions on Information Theory}%
{Blackburn \MakeLowercase{\textit{et al.}}:Constructions and bounds for codes with restricted overlaps}

\maketitle

\begin{abstract}
Non-overlapping codes have been studied for almost 60 years. In such a code, no proper, non-empty prefix of any codeword is a suffix of any codeword.
In this paper, we study codes in which overlaps of certain specified sizes  are forbidden. We prove some general bounds and we give several constructions in the case of binary codes. Our techniques also allow us to provide an alternative, elementary proof of a lower bound on non-overlapping codes due to Levenshtein \cite{Lev64} in 1964.
\end{abstract}

\begin{IEEEkeywords} Non-overlapping codes, weakly mutually uncorrelated codes, cross-bifix-free codes.
\end{IEEEkeywords}

\section{Introduction}

Let $u$ and $v$ be (not necessarily distinct) words of length $n$ over a specified alphabet. Let $t$ be an integer such that $1 \leq t \leq n-1$. We say that $u$ and $v$ have a \emph{$t$-overlap} if the prefix of $u$ of length $t$ is identical to the suffix of $v$ of length $t$. A code $C$ is \emph{$t$-overlap-free} if no codewords $u$ and $v$ in $C$ have a $t$-overlap. A code $C$ is \emph{non-overlapping} if it is $t$-overlap-free for all $t$ such that $1 \leq t \leq n-1$. 

Motivated by applications including frame synchronization, non-overlapping codes have been studied by numerous authors over the years, e.g., see \cite{Bil,BPP12,Bl,Chee,Lev64,Lev70,Wang}. 

Here we consider a less restrictive definition. Suppose that $t_1$ and $t_2$ are integers such that $1 \leq t_1 \leq t_2 \leq n-1$. We say that a code $C$ is \emph{$(t_1,t_2)$-overlap-free} if it is $t$-overlap-free for all $t$ such that $t_1 \leq t \leq t_2$. Two special cases of interest are
codes that are $(k,n-1)$-overlap-free (i.e., overlaps of size at least $k$ are not allowed) and codes that are $(1,k)$-overlap-free (i.e., overlaps of size at most $k$ are not allowed).  

Motivated by applications in DNA-based storage systems and synchronization protocols, 
$(k, n - 1)$-overlap-free codes were studied in \cite{YKM} and
termed $k$-weakly mutually uncorrelated codes. On the other hand, $(1,k)$-overlap-free  codes could be useful in a setting where we  have ``approximate'' synchronization, i.e., if we can assume that codewords will not ``drift'' too much. For example, suppose (see Figure~\ref{fig:application}) that we transmit blocks $1$, $2$, $3$ and so on, each a codeword of the same length $n$. We consider channels where a received block might be corrupted, with bits changed and up to $k$ bits inserted or deleted. We detect a loss of synchronization by checking if each block of $n$ received bits is a codeword. If we use an $(n-k,n-1)$-overlap-free code, we are guaranteed to detect a loss of synchronization after $2n$ bits are received. If we use a $(1,k)$-overlap-free code, we are guaranteed to detect a loss of synchronization after $3n$ bits if there are inserted bits, but after only $n$ bits are received if bits have been deleted. Thus, in channels where deletions are more likely than insertions, $(1,k)$-overlap-free codes have an advantage over $(n-k,n-1)$-overlap-free codes.
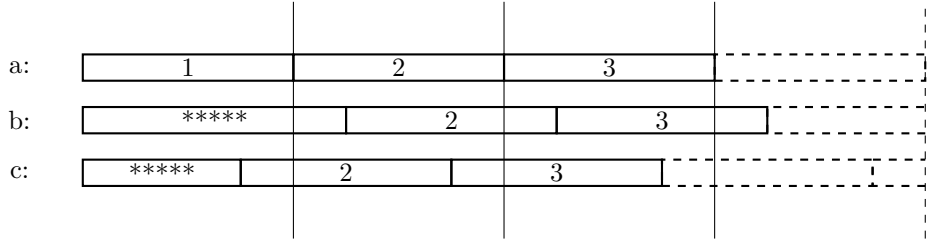
\begin{figure*}[t]
\begin{center}
\begin{tikzpicture}[scale=0.7]
\draw (-1.2,0.25) node {a:};
\draw[thick] (0,0) rectangle (4,0.5) node[midway] {1};
\draw[thick] (4,0) rectangle (8,0.5) node[midway] {2};
\draw[thick] (8,0) rectangle (12,0.5) node[midway] {3};
\draw[dashed,thick] (12,0) rectangle (16,0.5);

\draw (-1.2,-0.75) node {b:};
\draw[thick] (0,-1) rectangle (5,-0.5) node[midway] {*****};
\draw[thick] (5,-1) rectangle (9,-0.5) node[midway] {2};
\draw[thick] (9,-1) rectangle (13,-0.5) node[midway] {3};
\draw[dashed,thick] (16,-1) -- (13,-1) -- (13,-0.5) -- (16,-0.5);

\draw (-1.2,-1.75) node {c:};
\draw[thick] (0,-2) rectangle (3,-1.5) node[midway] {*****};
\draw[thick] (3,-2) rectangle (7,-1.5) node[midway] {2};
\draw[thick] (7,-2) rectangle (11,-1.5) node[midway] {3};
\draw[dashed,thick] (11,-2) rectangle (15,-1.5);
\draw[dashed,thick] (16,-2) -- (15,-2) -- (15,-1.5) -- (16,-1.5);

\draw (4,-3) -- (4,1.5);
\draw (8,-3) -- (8,1.5);
\draw (12,-3) -- (12,1.5);
\draw[dashed] (16,-3) -- (16,1.5);
\end{tikzpicture}
 \end{center}
\caption{Transmitting codewords when blocks are corrupted, and change length: (a) Transmitted data, (b) Inserted and corrupted bits, and (c) Deleted and corrupted bits.}
\label{fig:application}
 \end{figure*}
 
We comment that codes for synchronization is a large and thriving area, which we cannot hope to cover comprehensively here. Notable related problems are codes designed to correct bursts of insertions or deletions~\cite{Cheng, Lev65, Lev70a, Schoe}, and variants of the non-overlapping problem in two-dimensions~\cite{Bar}.

In general, we wish to determine the maximum number of codewords in a $(t_1,t_2)$-overlap-free code. 
In Section \ref{twobounds.sec}, we prove two upper bounds, on the size of $(k,n-1)$-overlap-free codes and $(1,k)$-overlap-free codes. Section \ref{construction.sec} begins a study of constructions for $(1,k)$-overlap-free codes over a binary alphabet. Our first construction, the \textsf{Doubling Construction}, gives an inductive approach to the construction of these codes. 
Section \ref{graph.sec} introduces a graph-based interpretation of these codes. This approach is used to prove the optimality of our codes for $k \leq 6$. Section \ref{m-min.sec} presents an explicit construction that we term the \textsf{$m$-minimum Construction}, as well as the closely related \textsf{Zero Block Construction}. Both of these permit ``good'' codes to be constructed for specified values of $k$. The second of these two constructions can be analyzed by exploiting a connection with $n$-step Fibonacci numbers. 
We provide exact as well as asymptotic bounds; it is shown that the constructed codes are within a small constant factor of being optimal. Section \ref{classical.sec} revisits the classical problem of non-overlapping codes and discusses how our techniques apply to this problem. In particular, we provide an alternative, elementary proof of a lower bound on non-overlapping codes due to Levenshtein \cite{Lev64} in 1964. Finally, Section \ref{summary.sec} is a brief discussion and summary.

\section{Two upper bounds}
\label{twobounds.sec}

Chee \emph{et al.}~\cite{Chee} proved that if $C$ is a non-overlapping code over an alphabet of cardinality $q$, then $|C| \leq q^n / (2n-1)$. This bound can be proven using a simple combinatorial argument; see Blackburn \cite{Bl}. Also, a stronger bound has been proven by Levenshtein \cite{Lev70} using analytic combinatorics.

For $(k,n-1)$-overlap-free codes, Yazdi \emph{et al.} \cite{YKM} proved that such a code $C$ satisfies the inequality $|C| \leq q^n / (n-k+1)$. The following  stronger bound can be proven using the argument from \cite{Bl}. Note that the special case $k=1$ of Theorem \ref{lower.thm} is essentially the bound proven in \cite{Bl}.

\begin{Theorem}
\label{lower.thm}
If $C$ is a $(k,n-1)$-overlap-free code over an alphabet of cardinality~$q$, then
\[ |C| \leq \frac{q^n}{2n-2k+1}.\]
\end{Theorem}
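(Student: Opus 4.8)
The plan is to adapt the counting argument from Blackburn \cite{Bl} that is referenced immediately before the statement. The idea is to consider, for each codeword, a carefully chosen collection of its cyclic shifts or overlapping concatenations, and to argue that these must all be distinct strings of some fixed length. Counting these strings and bounding them by the total number of strings of that length over the alphabet yields the inequality.

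Concretely, first I would form, for each codeword $u \in C$, the doubled word $uu$ of length $2n$, and consider the set of length-$n$ windows (substrings) obtained by sliding across $uu$. The $t$-overlap condition translates into a statement about which windows can coincide: if $C$ is $(k,n-1)$-overlap-free, then overlaps of every length $t$ with $k \le t \le n-1$ are forbidden, and this should force a block of consecutive windows associated with each codeword to be pairwise distinct and, moreover, distinct from the windows coming from other codewords. The key is to identify exactly $2n-2k+1$ distinct windows per codeword: intuitively, the forbidden overlap range has size $(n-1)-k+1 = n-k$, and the symmetric nature of counting prefixes and suffixes doubles this to roughly $2(n-k)$, with an additive adjustment giving $2n-2k+1$. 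I would make this precise by defining, for each $u$, the substrings of $uu$ starting at positions ranging over an interval of length $2n-2k+1$, and showing that the $(k,n-1)$-overlap-free property guarantees no two such substrings (across all codewords) are equal.

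The crux is the injectivity claim: I would assume two of these windows coincide and derive a forbidden $t$-overlap for some $t$ in the range $[k,n-1]$, contradicting the hypothesis. There are two cases to handle: a collision between two windows from the \emph{same} codeword, and a collision between windows from \emph{different} codewords. The same-codeword case should produce a self-overlap (a codeword $t$-overlapping itself), while the different-codeword case produces a genuine $t$-overlap between distinct codewords; in both situations the relevant $t$ must land in $[k,n-1]$ precisely because I restricted the window-starting positions to an interval of length $2n-2k+1$ (positions too close to the ends would correspond to overlaps of length below $k$, which are permitted and hence not useful for the argument). Once injectivity is established, the total number of windows produced is $|C| \cdot (2n-2k+1)$, each a distinct string of length $n$, so $|C|(2n-2k+1) \le q^n$, which rearranges to the claimed bound.

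The main obstacle I anticipate is pinning down the exact window interval and verifying that a collision always yields an overlap whose length $t$ falls in the permitted-to-be-forbidden range $[k,n-1]$, rather than in the range $[1,k-1]$ where overlaps are allowed. Getting the boundary indices right — so that the count comes out to exactly $2n-2k+1$ and not an off-by-one variant — is the delicate bookkeeping step. The $k=1$ case, which recovers Blackburn's bound $q^n/(2n-1)$ for non-overlapping codes, serves as a useful sanity check: there the window interval has full length $2n-1$, matching the known result and confirming the indexing is correct.
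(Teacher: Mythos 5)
Your general strategy (a double count in which any coincidence is ruled out by exhibiting a forbidden $t$-overlap with $t\in[k,n-1]$) is the right one, but the specific objects you propose to count do not exist in sufficient number: a string $uu$ of length $2n$ has only $n+1$ starting positions for a length-$n$ window, and the windows at positions $1$ and $n+1$ are both $u$ itself, so each codeword supplies at most $n$ distinct windows --- all of them cyclic shifts of $u$. You therefore cannot extract $2n-2k+1$ windows per codeword whenever $2n-2k+1>n$, i.e.\ whenever $k<(n+1)/2$; in particular your own sanity check $k=1$ already fails, since $uu$ admits no interval of $2n-1$ valid starting positions. The injectivity half of your argument is essentially sound: for $k\le n/2$, a coincidence $\sigma^{j}(u)=\sigma^{j'}(v)$ of cyclic shifts forces an overlap of length $d$ or $n-d$, where $d=(j-j')\bmod n$, and one of these lands in $[k,n-1]$ unless $u=v$ and $d=0$. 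But that only ever delivers $|C|\cdot n\le q^n$, i.e.\ $|C|\le q^n/n$, which is strictly weaker than the claimed bound throughout the range $k<(n+1)/2$.

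The paper obtains the factor $2n-2k+1$ by reversing the direction of the count. Instead of generating windows from each codeword, it counts pairs $(w,i)$ where $w$ ranges over all $q^{2n-2k+1}$ words of length $2n-2k+1$, read cyclically, and $i$ is a position at which the cyclic length-$n$ window $w(i)$ is a codeword. Each codeword occurrence extends to such a $w$ in exactly $q^{\,n-2k+1}$ ways, so the set $X$ of such pairs has size $(2n-2k+1)\,|C|\,q^{\,n-2k+1}$; and two distinct occurrences inside one $w$ would share at least $2n-(2n-2k+1)=2k-1$ symbols, hence at least $\lceil(2k-1)/2\rceil=k$ at one end, which is a forbidden overlap. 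Thus each $w$ carries at most one occurrence, $|X|\le q^{2n-2k+1}$, and the bound follows. The ambient cyclic word of length $2n-2k+1$ --- rather than $2n$, and with the $q^{\,n-2k+1}$ free symbols doing real work --- is the ingredient your proposal is missing; any attempt to repair the window argument so that the count comes out to $2n-2k+1$ leads to exactly this construction.
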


\begin{proof}
Let $C$ be a $(k,n-1)$-overlap-free code over an alphabet $F$ of cardinality $q$. 
For $ w \in F^{2n-2k+1}$ and $1\leq i \leq 2n-2k+1$, define 
$w(i) = (w_i, w_{i+1}, \dots , w_{i + n-1})$, where the subscripts are reduced modulo $2n-2k+1$.
Thus, $w(i)$ is the cyclic subword of length $n$ of $w$ starting at $w_i$. 
Define 
\begin{multline*}
X = \{ (w,i): w \in F^{2n-2k+1},\\
1\leq i \leq 2n-2k+1, w(i) \in C \} .
\end{multline*}

Suppose there exists $w\in F^{2n-2k+1}$ and $i, i'$ such that $i \neq i'$ and $(w,i), (w,i') \in X$. We claim that 
$(w,i)$ and $(w,i')$ have an overlap of size at least $k$. This occurs because the overlap between $(w,i)$ and $(w,i')$ is 
at least \[n + n - (2n - 2k+1) = 2k-1,\] and hence the overlap at one end is at least $\lceil (2k-1)/2 \rceil = k$.  This violates the non-overlapping properties of $C$. Hence, for each $w\in F^{2n-2k+1}$ there is at most one $i$ such that $(w,i) \in X$. Thus it follows that \[|X| \leq q^{2n-2k+1}.\]

Also, $|X| = (2n-2k+1)|C|q^{n-2k+1}$, since there are $2n-2k+1$ choices for $i$, $|C|$ choices for $w(i)$, and 
$q^{n-2k+1}$ choices for the remaining entries in $w$.

Hence, 
\[(2n-2k+1)|C|q^{n-2k+1} \leq q^{2n-2k+1},\]
which immediately yields the stated upper bound on $|C|$.
\end{proof}

It is natural to ask if there is a ``related'' upper bound for $(1,k)$-overlap-free codes.

\begin{Theorem}
\label{1k.thm}
Let $C$ be a $(1,k)$-overlap-free code, where $k\leq n/2$. Then
\[
|C|\leq 
\tfrac{1}{2k}q^n.
\]
\end{Theorem}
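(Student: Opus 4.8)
The plan is to reduce the statement to a clean combinatorial inequality about the length-$k$ prefixes and suffixes of the codewords, and then to prove that inequality by a rotation (necklace) argument on words of length $2k$.

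First I would record that, because every forbidden overlap has size $t\le k$, the whole condition only ``sees'' the first $k$ and last $k$ symbols of each codeword. So I set $A=\{(c_1,\dots,c_k):c\in C\}$ and $B=\{(c_{n-k+1},\dots,c_n):c\in C\}$, both subsets of $F^k$. Since $k\le n/2$, a codeword splits as prefix/middle/suffix of lengths $k$, $n-2k$, $k$, and the map $c\mapsto(\text{prefix},\text{middle},\text{suffix})$ is injective, giving $|C|\le |A|\,q^{\,n-2k}\,|B|$. The $(1,k)$-overlap-free property transfers to the pair $(A,B)$ as: for all $a\in A$, $b\in B$ and all $t$ with $1\le t\le k$, the length-$t$ prefix of $a$ differs from the length-$t$ suffix of $b$. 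Everything then reduces to proving the product bound $|A|\,|B|\le q^{2k}/(2k)$.

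To prove this bound I would pass to words of length $2k$ and count with rotations. Let $G$ be the set of words $ba\in F^{2k}$ with $b\in B$ and $a\in A$; since the split point is fixed at $k$, we have $|G|=|A|\,|B|$. Writing $\sigma$ for the cyclic shift on $F^{2k}$, the heart of the argument is the claim that if a word lies in $G$ then none of its nontrivial rotations do. Granting this, every word of $G$ must be aperiodic (otherwise some $\sigma^r$ with $1\le r\le 2k-1$ fixes it, yet would also have to move it out of $G$), and each full rotation orbit meets $G$ at most once. Aperiodic orbits have size exactly $2k$ and are pairwise disjoint, so $|G|$ is at most the number of such orbits, which is at most $q^{2k}/(2k)$. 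Combining with the reduction yields $|C|\le q^{n}/(2k)$.

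The main obstacle is proving the rotation claim, and this is where the one-sidedness of the hypothesis must be exploited carefully. For $r=k$ the rotation $\sigma^{k}(ba)=ab$ fails to lie in $G$ because $A\cap B=\emptyset$ (the $t=k$ instance of the hypothesis forces this, taking $a=b$). For a shift $r$ with $1\le r\le k-1$ one checks that the last $r$ symbols of the new length-$k$ prefix are exactly $a_1\cdots a_r$, so the hypothesis applied to the pair $(a,\ \text{new prefix})\in A\times B$ with overlap parameter $t=r$ is violated; for $k+1\le r\le 2k-1$ one argues symmetrically using the new length-$k$ suffix and parameter $t=2k-r$. I expect the delicate points to be (i) selecting the correct overlap index $t$ for each rotation so that the forbidden prefix/suffix coincidence is actually forced, and (ii) the counting subtlety that ``at most one element per orbit'' on its own points the inequality the wrong way, since there are more than $q^{2k}/(2k)$ orbits in general; it is therefore essential to rule out periodic words so that every orbit meeting $G$ has full size $2k$.
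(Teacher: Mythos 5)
Your proof is correct and follows essentially the same route as the paper's: both pass to the length-$2k$ words obtained by deleting the middle of each codeword, show that these words must be aperiodic and that no two of them are nontrivial cyclic shifts of one another, and then count full-size rotation orbits to obtain the factor $1/2k$. The only differences are cosmetic --- you work with the full product $B\times A$ (concatenated in suffix-then-prefix order) rather than only the realized prefix/suffix pairs, and you fold the periodicity argument and the distinct-cyclic-shift argument into a single rotation claim.
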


\begin{proof}
Let $k\leq n/2$ and let $C$ be a $(1,k)$-overlap-free code. Let $\XXX$ be the set of all codewords with the middle $n-2k$ positions removed. Clearly, $|C|\leq q^{n-2k}|\XXX |$. The elements of $\XXX$ are $q$-ary words of length $2k$. We have $\XXX = \YYY \cup \ZZZ$, where the elements in $\YYY$ have (cyclic) period strictly dividing $2k$, and the elements of $\ZZZ$ have period exactly $2k$. 

Suppose $y\in{\cal Y}$. If $y$ has period $p$, where $p$ strictly divides $2k$, then $p\leq k$. Then the first and last $p$ elements of a corresponding codeword $w\in{\cal X}$ agree. This codeword has a $p$-overlap with itself, which contradicts the $(1,k)$-overlap-free property. We conclude that ${\cal Y} = \emptyset$.

Now we claim that no pair of distinct elements in $\ZZZ$ are cyclic shifts of each other. For a contradiction, suppose that $z_1,z_2$ are a pair of distinct elements from $\ZZZ$ that are cyclic shifts of each other. Let $c_1,c_2$ be the corresponding codewords in $C$. Write $\sigma$ for the `cyclic shift left by one position' operator, so \[\sigma(a_1a_2\cdots a_{2k})=a_2a_3\cdots a_{2k}a_1.\] 
Then $z_1{=}\sigma^j(z_2)$ for some 
$j{\in}\{1,\ldots,2k-1\}$.
Swapping $z_1$ and $z_2$ if needed, we may assume that $j$ lies in the set $\{1,\ldots,k\}$ (as swapping replaces $j$ by $2k-j$). But now the $j$-prefix of $z_2$ is equal to the $j$-suffix of $z_1$. So the $j$-prefix of $c_2$ is equal to the $j$-suffix of $c_1$. This contradicts our assumption that $C$ is $(1,k)$-overlap-free, and so our claim follows.

We can partition the set of all $q$-ary sequences of length $2k$ and period exactly $2k$ into equivalence classes under cyclic shift. Each class contains $2k$ sequences, and so there are at most $q^{2k}/2k$ classes. The previous paragraph shows that no class contains two elements of $\ZZZ$, and so $|\ZZZ |\leq q^{2k}/2k$. Hence
\begin{align*}
|C|\leq q^{n-2k}|\XXX |&=q^{n-2k}(|\YYY |+|\ZZZ |)\\
&=q^{n-2k}|\ZZZ |\\
&\leq \frac{q^n}{2k}. \qedhere
\end{align*}
\end{proof}

\section{Constructions}
\label{construction.sec}

In this section, and the next two sections, we investigate constructions and bounds for $(1,k)$-overlap-free codes. All of our constructions will be based on the following template.

\begin{Construction}
Let $F$ be an alphabet of size $q$ and let $n$ and $t$ be positive integers such that $n \geq 2t$.
Let $\PPP$ and $\SSS$ be two sets of $t$-tuples from $F^t$. Define 
\begin{multline*}
C(\PPP,\SSS,n,t) =
 \{ p \parallel x \parallel s: p \in \PPP,\\
s \in \SSS, x \in F^{n-2t} \} .
\end{multline*}
Thus a codeword $c \in C$ has a prefix chosen from $\PPP$, a suffix chosen from $\SSS$, and the remaining $n-2t$ elements are arbitrary symbols from $F$. We also observe that $|C(\PPP,\SSS,n,t)| = |\PPP| \times |\SSS| \times q^{n-2t}$.
\end{Construction}

The following Lemma is immediate.

\begin{Lemma}
\label{disjoint.lem}
$C(\PPP,\SSS,n,t)$  is $t$-overlap-free if and only  if $\PPP  \cap \SSS = \emptyset$.
\end{Lemma}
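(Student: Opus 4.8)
The plan is to prove the equivalence in Lemma~\ref{disjoint.lem} by unwinding the definition of a $t$-overlap for codewords of the product form $p \parallel x \parallel s$ produced by the construction. Recall that two codewords $u$ and $v$ have a $t$-overlap when the length-$t$ prefix of $u$ coincides with the length-$t$ suffix of $v$. Since $n \geq 2t$, the length-$t$ prefix of any codeword $c = p \parallel x \parallel s$ lies entirely within the prefix block $p \in \PPP$, and the length-$t$ suffix lies entirely within the suffix block $s \in \SSS$. In other words, for any codeword the relevant $t$-prefix is exactly $p$ and the relevant $t$-suffix is exactly $s$; the middle block $x$ never intrudes on either end because there are at least $2t$ positions in total.

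With this observation the argument splits into the two directions of the biconditional. For the contrapositive of the ``if'' direction, suppose $\PPP \cap \SSS \neq \emptyset$, so there is some $r \in \PPP \cap \SSS$. I would exhibit an explicit witnessing pair: take $u = r \parallel x \parallel s$ for any admissible $x$ and $s$, and take $v = p \parallel x' \parallel r$ for any admissible $p$ and $x'$. Both are genuine codewords in $C(\PPP,\SSS,n,t)$ since $r$ is simultaneously a legal prefix and a legal suffix. The $t$-prefix of $u$ is $r$ and the $t$-suffix of $v$ is $r$, so $u$ and $v$ have a $t$-overlap, and the code is not $t$-overlap-free. Conversely, for the ``only if'' direction, suppose $C(\PPP,\SSS,n,t)$ is not $t$-overlap-free, so some codewords $u = p \parallel x \parallel s$ and $v = p' \parallel x' \parallel s'$ have a $t$-overlap. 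By the opening observation, the $t$-prefix of $u$ equals $p$ and the $t$-suffix of $v$ equals $s'$, whence $p = s'$. Since $p \in \PPP$ and $s' \in \SSS$, this common word lies in $\PPP \cap \SSS$, so the intersection is nonempty.

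There is genuinely no hard step here; the lemma is ``immediate'' precisely because the hypothesis $n \geq 2t$ forces the prefix and suffix windows of length $t$ to be disjoint and to coincide exactly with the chosen blocks $p$ and $s$. The only point requiring a modicum of care is to confirm that the witnessing codewords I construct in the ``if'' direction actually exist and belong to $C(\PPP,\SSS,n,t)$ --- this is where one must use that $n - 2t \geq 0$ so that the middle block $x \in F^{n-2t}$ is well defined (possibly empty when $n = 2t$), and that $\PPP$ and $\SSS$ are nonempty in the relevant coordinates. I would therefore state the boundary observation about the length-$t$ windows first, prove both implications by the explicit-witness and forced-equality arguments above, and note that the degenerate case $n = 2t$ (empty middle block) is handled uniformly.
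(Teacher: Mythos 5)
Your proof is correct: the paper declares this lemma ``immediate'' and gives no proof, and your argument --- observing that since $n \geq 2t$ the $t$-prefix of $p \parallel x \parallel s$ is exactly $p$ and the $t$-suffix is exactly $s$, then checking both directions by exhibiting a witness and by forcing $p = s'$ --- is precisely the intended unwinding of the definitions. Nothing is missing; the handling of the nonemptiness of $\PPP$ and $\SSS$ and of the degenerate case $n = 2t$ is careful but not strictly necessary to spell out.
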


Suppose $\PPP$ and $\SSS$ are two sets of $k$-tuples from $F^k$. Suppose $t$ is a positive integer such that $t < k$. Define
$\PPP|_t$ to be the set of all $t$-prefixes of tuples from $\PPP$ and 
$\SSS|_t$ to be the set of all $t$-suffixes of tuples from $\SSS$. So
\begin{multline*}
\PPP|_t = \{ (p_1,\dots , p_t) : \text{ there exists }\\
(p_1,\dots ,p_t,p_{t+1},\ldots, p_k) \in \PPP\} ,
\end{multline*}
and
\begin{multline*}
\SSS|_t = \{ (s_{k-t+1},\dots , s_k) : \text{ there exists }\\
(s_1,\dots,s_{k-t},s_{k-t+1},\ldots , s_k) \in \SSS\} .
\end{multline*}

The following is a straightforward extension of Lemma \ref{disjoint.lem}.

\begin{Theorem}
\label{disjoint.thm}
$C(\PPP,\SSS,n,t)$  is a $(1,k)$-overlap-free code if and only if $\PPP|_t  \cap \SSS|_t = \emptyset$ for $1 \leq t \leq k$.
\end{Theorem}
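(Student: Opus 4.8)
The plan is to reduce the statement to a single-length overlap condition and then quantify over all overlap lengths, after first identifying the relevant prefix and suffix sets of the code itself. I read the construction here as $C(\PPP,\SSS,n,k)$ with $\PPP,\SSS\subseteq F^k$, since that is the only reading consistent with the definitions of $\PPP|_t$ and $\SSS|_t$ given just above the statement.

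First I would record the key bookkeeping observation. Fix $t$ with $1\le t\le k$ and write a typical codeword as $c=p\parallel x\parallel s$ with $p\in\PPP$, $s\in\SSS$, and $x\in F^{n-2k}$. Because $t\le k$, the length-$t$ prefix of $c$ coincides with the length-$t$ prefix of the block $p$, and the length-$t$ suffix of $c$ coincides with the length-$t$ suffix of the block $s$. Conversely, every $t$-prefix of an element of $\PPP$ and every $t$-suffix of an element of $\SSS$ is realised by some codeword (choosing the free middle block $x$, and the other end block, arbitrarily). Hence, assuming $\PPP$ and $\SSS$ are nonempty, the set of all $t$-prefixes of codewords of $C$ is exactly $\PPP|_t$ and the set of all $t$-suffixes of codewords of $C$ is exactly $\SSS|_t$; the degenerate empty cases make both sides of the desired biconditional trivially true and can be dispatched separately.

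Next I would translate the overlap condition for a single $t$. By definition $C$ is $t$-overlap-free precisely when there are no codewords $u,v$ with the $t$-prefix of $u$ equal to the $t$-suffix of $v$; equivalently, no $t$-tuple is simultaneously a $t$-prefix of some codeword and a $t$-suffix of some codeword. Combining this with the identification from the previous step shows that $C$ is $t$-overlap-free if and only if $\PPP|_t\cap\SSS|_t=\emptyset$. This is the natural generalisation of Lemma~\ref{disjoint.lem}, which is exactly the case $t=k$ (where the prefix/suffix sets are $\PPP$ and $\SSS$ themselves); the only new content is that shorter overlaps of length $t<k$ are governed by the truncated sets $\PPP|_t,\SSS|_t$ rather than by $\PPP,\SSS$.

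Finally I would quantify over $t$: since $C$ is $(1,k)$-overlap-free exactly when it is $t$-overlap-free for every $t$ with $1\le t\le k$, conjoining the single-$t$ equivalences over all such $t$ gives the stated result. I expect no genuine obstacle here; the only point needing care is the first step, namely verifying that truncating a codeword to its first (or last) $t$ symbols factors through the end block $p$ (or $s$) when $t\le k$, so that the prefix and suffix sets of the \emph{code} coincide exactly with $\PPP|_t$ and $\SSS|_t$.
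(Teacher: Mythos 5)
Your proof is correct and is precisely the ``straightforward extension'' of Lemma~\ref{disjoint.lem} that the paper asserts without giving any argument: the key observation that for $t\le k$ the $t$-prefixes and $t$-suffixes of codewords are exactly $\PPP|_t$ and $\SSS|_t$ (modulo the degenerate empty cases, which you dispatch correctly) is the whole content. You also rightly resolved the paper's notational slip by reading the construction as $C(\PPP,\SSS,n,k)$, so that the quantified variable $t$ does not clash with the construction parameter.
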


\subsection{The \textsf{Doubling Construction}}
\label{sec3}

Theorem \ref{disjoint.thm} suggests a way to build up $(1,k)$-overlap-free codes inductively. We will refer to this process as the \textsf{Doubling Construction}. For the rest of the paper, we consider the binary case, where $q = 2$. 

We take $F = \{0,1\}$. Suppose we begin with $k=1$. Without loss of generality, we can define $\PPP = \{0\}$ and $\SSS = \{1\}$. So $C(\PPP,\SSS,n,1)$ would consist of all $2^{n-2}$ binary $n$-tuples that begin with a $0$ and end with a $1$.

Next, we consider $k=2$. We consider extensions of the solution for $k=1$, where we append a symbol to a tuple in $\PPP$ and we prepend a symbol to a tuple in $\SSS$:

\[
\begin{array}{c|c}
\PPP & \SSS \\\hline
0 0 & \textcolor{red}{0 1} \\
\textcolor{red}{0 1} & 1 1
\end{array}
\]

We cannot include $01$ in both  $\PPP $ and $\SSS $. Without loss of generality, we include $01$ in $\PPP $ but not in  $\SSS $.
So we obtain the following solution for $k=2$: 
\begin{center}
$\PPP = \{00,01\}$ and $\SSS = \{11\}$.
\end{center} Thus $C(\PPP,\SSS,n,2)$ 
would consist of
\[2 \times 2^{n-4}\]
binary $n$-tuples.

We can use a similar process to proceed from $k=2$ to $k=3$.
We append a symbol to each tuple in $\PPP$ and we prepend a symbol to each tuple in $\SSS$:
\[
\begin{array}{c|c}
\PPP & \SSS \\\hline
0 0 0 & \textcolor{red}{0 1 1} \\
0 0 1 & 1 1 1\\
0 1 0 & \\
\textcolor{red}{0 1 1} &
\end{array}
\]
Now the $3$-tuple $011$ is duplicated. We retain it in $\SSS$ and delete it from $\PPP$ (this will lead to the largest code, since $3 \times 2 > 4 \times 1$.
We obtain the following solution for $k=3$: $\PPP = \{000,001,010\}$ and $\SSS = \{011,111\}$. Thus $C(\PPP,\SSS,n,3)$  consists of  
\[3 \times 2 \times 2^{n-6} = 6 \times 2^{n-6}\] binary $n$-tuples.

Now we proceed from $k=3$ to $k=4$.
We get the following:
\[
\begin{array}{c|c}
\PPP & \SSS \\\hline
0 0 0 0 & \textcolor{red}{0 0 1 1} \\
0 0 0 1 & {1 0 1 1} \\
0 0 1 0 & 0 1 1 1\\
\textcolor{red}{0 0 1 1} & 1 1 1 1\\
0 1 0 0 & \\ 
0 1 0 1 & \\ 
\end{array}
\]
The $4$-tuple $0011$ is duplicated. Again, we retain it in $\SSS$ and delete it from $\PPP$. 
We obtain the following solution for $k=4$: 
\begin{align*}
\PPP &= \{0000,0001,0010,0100,0101\}\\
\shortintertext{and}
 \SSS &= \{0011,1011,0111,1111\}.
\end{align*}
Thus $C(\PPP,\SSS,n,4)$  consists of  
\[
5 \times 4 \times 2^{n-8} = 20 \times 2^{n-8}\
\]
binary $n$-tuples.

When we proceed from $k=4$ to $k=5$,
we obtain the following:
\[
\begin{array}{c|c}
\PPP & \SSS \\\hline
0 0 0 0 0 & \textcolor{red}{0 0 0 1 1} \\
0 0 0 0 1 & 1 0 0 1 1 \\
0 0 0 1 0 & \textcolor{blue}{0 1 0 1 1} \\
\textcolor{red}{0 0 0 1 1} & 1 1 0 1 1 \\
0 0 1 0 0 & 0 0 1 1 1\\
0 0 1 0 1 & 1 0 1 1 1\\
0 1 0 0 0 & 0 1 1 1 1\\
0 1 0 0 1 & 1 1 1 1 1\\
0 1 0 1 0 & \\ 
\textcolor{blue}{0 1 0 1 1} & \\ 
\end{array}
\]
Now there are two duplicated $5$-tuples. We will retain both $5$-tuples in $\SSS$ in order to balance the sizes of $\PPP$ and $\SSS$. 
So we obtain the following solution for $k=5$:  
\[\PPP = \left\{
\begin{array}{@{}l@{}}
00000,00001,00010,00100,\\00101,01000,01001,01010
\end{array}
\right\}\] and 
\[\SSS = \left\{
\begin{array}{@{}l@{}}
00011,10011,01011,11011,\\00111,10111,01111,11111
\end{array}
\right\}.\] 
Thus $C(\PPP,\SSS,n,5)$  consists of  
\[8 \times 8 \times 2^{n-10} = 2^{n-4}\] binary $n$-tuples.

We can make a few observations as to what happens when we increase $k$ by one in the \textsf{Doubling Construction}.
\begin{enumerate}
\item First, we double the size of $\PPP$ and $\SSS$ by appending $0$ and $1$ to every tuple in $\PPP$ and prepending $0$ and $1$ to every tuple in $\SSS$.
\item Then we look for duplicates in $\PPP$ and $\SSS$. Note that a duplicate occurs in the new $\PPP$ and $\SSS$ whenever there was a $k$-tuple in the old $\PPP$ whose suffix of size $k-1$ is identical to a prefix of size $k-1$ of a $k$-tuple in the old $\SSS$. For example, when $k=4$, we see that $0001 \in \PPP$ and $0011 \in \SSS$. The suffix of size 3 of $0001$, namely $001$, is the same as the prefix of size $3$ of $0011$. Thus, when we append $1$ to $0001$ and we prepend $0$ to $0011$, we obtain the duplicate string $00011$.
\item  Finally, we eliminate one copy of each duplicate so as to balance the resulting sizes of $\PPP$ and $\SSS$ as much as possible. 
\end{enumerate}

The results in Table~\ref{tab} are obtained using the \textsf{Doubling 
Construction}. Note that here and elsewhere we denote the maximum size of a $(1,k)$-overlap-free code in $\{0,1\}^n$ by $C(n,k)$.

\begin{table}[t]
\caption{Results obtained from the \textsf{Doubling Construction}}
\label{tab}
\[\begin{array}{rrrr@{}l}
\hline
k   &    |P_k| &  |S_k|   &  
\multicolumn{2}{c}{C(k,n)\geq}\\
\hline
  2 & 2 & 1 & 2&\times 2^{n-4}\\
  3 & 3 & 2 & 6&\times 2^{n-6}\\
  4 & 5 & 4 & 20&\times 2^{n-8}\\
  5 & 8 & 8 & 64&\times 2^{n-10}\\
  6 & 15 & 14 & 210&\times 2^{n-12}\\
  7 & 26 & 27 & 702&\times 2^{n-14}\\
  8 & 50 & 50 & 2500&\times 2^{n-16}\\
  9 & 94 & 94 & 8836&\times 2^{n-18}\\
 10 & 180 & 179 & 32220&\times 2^{n-20}\\
 11 & 343 & 343 & 117649&\times 2^{n-22}\\
 12 & 659 & 659 & 434281&\times 2^{n-24}\\
 13 & 1267 & 1266 & 1604022&\times 2^{n-26}\\
 14 & 2444 & 2444 & 5973136&\times 2^{n-28}\\
 15 & 4726 & 4725 & 22330350&\times 2^{n-30}\\
 16 & 9157 & 9158 & 83859806&\times 2^{n-32}\\
 17 & 17779 & 17779 & 316092841&\times 2^{n-34}\\
 18 & 34575 & 34575 & 1195430625&\times 2^{n-36}\\
 19 & 67340 & 67339 & 4534608260&\times 2^{n-38}\\
 20 & 131323 & 131323 & 17245730329&\times 2^{n-40}\\
 21 & 256416 & 256416 & 65749165056&\times 2^{n-42}\\
 22 & 501208 & 501207 & 251208958056&\times 2^{n-44}\\
 23 & 980684 & 980684 & 961741107856&\times 2^{n-46}\\
\hline
\end{array}
\]
\end{table}

\section{Optimal solutions---a graph-based approach}
\label{graph.sec}

In this section, we discuss a graph-based approach that can (in principle) be used to prove that a solution is optimal. In practice, the method will only be feasible for small values of $k$. Again, we restrict our attention to the case $q=2$ for convenience. Denote $F = \{0,1\}$ and suppose $k$ is a fixed positive integer.

We construct a bipartite graph $G_k$. The vertex set is $X \cup Y$, where $|X| = |Y| = 2^k$.
We associate each vertex in $X$ with a $k$-tuple from $F^k$, and similarly each vertex in $Y$ corresponds to a $k$-tuple from $F^k$. The vertices in $X$ will be denoted by $x_p$, where $p \in F^k$, and the vertices in $Y$ will be denoted by $y_s$, where $s \in F^k$. We will join vertices $x_p$ and $y_s$ by an edge if and only if a prefix of $p$ is identical to a suffix of $s$.  For example, the graph $G_2$ is depicted in Figure \ref{fig1and2}.

In general, the graph $G_k$ records incompatible prefixes and suffixes. More precisely, if $x_py_s$ is an edge of $G_k$, then there cannot exist two $n$-tuples in a $(1,k)$-overlap-free code where $p$ is a $k$-prefix of an $n$-tuple and $s$ is a $k$-suffix of a (not necessarily distinct) $n$-tuple. 

The following lemma is immediate.

\begin{Lemma}
\label{graph.lem}
Suppose $C$ is a $(1,k)$-overlap-free code. Let $\PPP$ denote all the $k$-prefixes of $n$-tuples in $C$ and let $\SSS$ denote all the $k$-suffixes of $n$-tuples in $C$. Denote $X_C = \{ x_p : p \in \PPP\}$
and $Y_C = \{ y_s : s \in \SSS\}$. Then $X_C \cup Y_C$ is an independent set of vertices in $G_k$.
\end{Lemma}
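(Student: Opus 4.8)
The statement says the lemma is \emph{immediate}, and indeed the entire content is a matter of unwinding the definition of $G_k$ against the $(1,k)$-overlap-free property. The plan is a direct proof by contradiction. First I would exploit the fact that $G_k$ is \emph{bipartite} with parts $X$ and $Y$: by construction there are no edges joining two vertices of $X$ or two vertices of $Y$, so every edge of $G_k$ runs between $X$ and $Y$. Consequently the set $X_C \cup Y_C$ fails to be independent only if $G_k$ contains an edge $x_p y_s$ with $x_p \in X_C$ and $y_s \in Y_C$; equivalently, with $p \in \PPP$ and $s \in \SSS$. So it suffices to rule out any such edge.

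Next I would assume, for contradiction, that such an edge $x_p y_s$ exists. By the defining property of $G_k$, this means that some prefix of $p$ is identical to a suffix of $s$; say the prefix of $p$ of length $t$ equals the suffix of $s$ of length $t$, for some $t$ with $1 \leq t \leq k$. The key observation is then that prefixes and suffixes of the length-$k$ blocks restrict correctly to prefixes and suffixes of full codewords. Since $p \in \PPP$, there is a codeword $c_1 \in C$ whose $k$-prefix is $p$; because $t \leq k$, the $t$-prefix of $c_1$ coincides with the $t$-prefix of $p$. Likewise, since $s \in \SSS$, there is a codeword $c_2 \in C$ whose $k$-suffix is $s$, and the $t$-suffix of $c_2$ coincides with the $t$-suffix of $s$.

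Combining these, the $t$-prefix of $c_1$ equals the $t$-suffix of $c_2$, so the (not necessarily distinct) codewords $c_1$ and $c_2$ have a $t$-overlap with $1 \leq t \leq k$. This contradicts the assumption that $C$ is $(1,k)$-overlap-free, and therefore no such edge can exist. Hence $X_C \cup Y_C$ is an independent set in $G_k$.

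There is no real obstacle here beyond careful bookkeeping: the only points demanding attention are that the bipartite structure forbids same-side edges (so only $X_C$–$Y_C$ edges need to be excluded), and that a matching prefix/suffix of the \emph{length-$k$} blocks $p$ and $s$ genuinely induces a $t$-overlap between the \emph{full} codewords for the relevant $t \leq k$. Both are immediate from the fact that $p$ and $s$ are the $k$-prefix and $k$-suffix of actual codewords, which is exactly why the lemma is stated without further comment.
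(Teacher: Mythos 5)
Your proof is correct and follows exactly the reasoning the paper intends when it declares the lemma ``immediate'': the paragraph preceding the lemma already records that an edge $x_p y_s$ forbids $p$ and $s$ from being a $k$-prefix and $k$-suffix of codewords, and you simply formalize this by converting a matching $t$-prefix/$t$-suffix ($1 \leq t \leq k$) into a forbidden $t$-overlap. No differences of substance to report.
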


\begin{Theorem}
\label{lowerg.thm}
Suppose $n \geq 2k$. Suppose that $X_C \cup Y_C$ is an independent set of vertices in $G_k$, where 
$X_C \subseteq X$ and $Y_C \subseteq Y$. Then there is a $(1,k)$-overlap-free code in $F^n$ having size 
\[
|X_C| \times |Y_C| \times 2^{n-2k}.
\]
\end{Theorem}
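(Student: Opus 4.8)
The plan is to apply the Construction directly, taking the prefix set and suffix set to be exactly the $k$-tuples that index the vertices of $X_C$ and $Y_C$. First I would set $\PPP = \{p \in F^k : x_p \in X_C\}$ and $\SSS = \{s \in F^k : y_s \in Y_C\}$, so that $|\PPP| = |X_C|$ and $|\SSS| = |Y_C|$, and form the code $C = C(\PPP,\SSS,n,k)$. Since $n \geq 2k$, the Construction is well defined (it requires $n \geq 2t$ with $t = k$), and the size formula recorded there, specialized to $q = 2$, gives $|C| = |\PPP| \times |\SSS| \times 2^{n-2k} = |X_C| \times |Y_C| \times 2^{n-2k}$, which is precisely the claimed size. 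So the only real content is to check that this $C$ is $(1,k)$-overlap-free.

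For that I would invoke Theorem \ref{disjoint.thm}: it suffices to show $\PPP|_t \cap \SSS|_t = \emptyset$ for every $t$ with $1 \leq t \leq k$. I would argue by contradiction. Suppose some $t$-tuple $w$ lies in $\PPP|_t \cap \SSS|_t$ for some such $t$. Then $w$ is the $t$-prefix of some $p \in \PPP$ and simultaneously the $t$-suffix of some $s \in \SSS$, i.e.\ a prefix of $p$ equals a suffix of $s$. By the defining edge relation of $G_k$, this is exactly the condition for $x_p$ and $y_s$ to be joined by an edge. But $x_p \in X_C$ and $y_s \in Y_C$, so this edge lies inside $X_C \cup Y_C$, contradicting the hypothesis that $X_C \cup Y_C$ is independent. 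Hence $\PPP|_t \cap \SSS|_t = \emptyset$ for all $t$ in the required range, and Theorem \ref{disjoint.thm} yields that $C$ is $(1,k)$-overlap-free.

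The proof is essentially a translation between two equivalent encodings of the same combinatorial constraint, so I do not expect a serious obstacle. The one point deserving care is to confirm that the edge relation of $G_k$ — ``some prefix of $p$ equals some suffix of $s$,'' ranging over overlap lengths $1$ through $k$ — matches exactly the union over $t$ of the conditions $\PPP|_t \cap \SSS|_t \neq \emptyset$, and to note that because $G_k$ is bipartite with $X_C \subseteq X$ and $Y_C \subseteq Y$, independence of $X_C \cup Y_C$ is equivalent to the absence of precisely these cross edges. Once that correspondence is stated cleanly, the contradiction step above closes the argument.
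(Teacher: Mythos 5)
Your proposal is correct and follows the same route as the paper: the paper's proof simply forms the code $C(\PPP,\SSS,n,k)$ from the tuples indexing $X_C$ and $Y_C$ and asserts the $(1,k)$-overlap-free property, which you justify in more detail by translating independence in $G_k$ into the condition $\PPP|_t\cap\SSS|_t=\emptyset$ of Theorem~\ref{disjoint.thm}. Your added care about matching the edge relation of $G_k$ to the union over $t$ of these conditions is exactly the (implicit) content of the paper's one-line argument.
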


\begin{proof}
Suppose 
$X_C \cup Y_C$ is an independent set of vertices in $G_k$. Include 
all $n$-tuples of the form $p \parallel x \parallel s$ where $p \in \PPP$, $s \in \SSS$, and $x \in F^{n-2k}$. This  is a $(1,k)$-overlap-free code having size $|X_C| \times |Y_C| \times 2^{n-2k}$.
\end{proof}

\begin{Theorem}
\label{optimal.thm}
Suppose $n \geq 2k$. Suppose that $X_C \cup Y_C$ is an independent set of vertices in $G_k$, where 
$X_C \subseteq X$ and $Y_C \subseteq Y$, such that $|X_C| \times |Y_C|$ is maximized. Then the maximum size of any $(1,k)$-overlap-free code in $F^n$ is exactly $|X_C| \times |Y_C| \times 2^{n-2k}$.
\end{Theorem}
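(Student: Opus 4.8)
The plan is to prove the two matching inequalities that together pin down the maximum. The lower bound is immediate: I would apply Theorem~\ref{lowerg.thm} directly to the maximizing independent set $X_C \cup Y_C$, which produces a $(1,k)$-overlap-free code in $F^n$ of size exactly $|X_C| \times |Y_C| \times 2^{n-2k}$. Hence the maximum size is at least this value.

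For the upper bound, I would take an arbitrary $(1,k)$-overlap-free code $D \subseteq F^n$ and examine its prefix and suffix structure. Let $\PPP$ be the set of all $k$-prefixes of codewords in $D$, and let $\SSS$ be the set of all $k$-suffixes. A straightforward counting argument then gives a bound on $|D|$: each codeword is completely determined by its $k$-prefix (at most $|\PPP|$ choices), its middle $n-2k$ symbols (at most $2^{n-2k}$ choices), and its $k$-suffix (at most $|\SSS|$ choices). Therefore $|D| \leq |\PPP| \times |\SSS| \times 2^{n-2k}$.

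Next I would invoke Lemma~\ref{graph.lem}, which guarantees that the vertex set $X_D \cup Y_D$, where $X_D = \{x_p : p \in \PPP\}$ and $Y_D = \{y_s : s \in \SSS\}$, is an independent set in $G_k$ with $X_D \subseteq X$ and $Y_D \subseteq Y$. Since the pair $(X_C, Y_C)$ in the statement was chosen precisely to maximize the product $|X_C| \times |Y_C|$ over all such independent sets, we obtain $|\PPP| \times |\SSS| = |X_D| \times |Y_D| \leq |X_C| \times |Y_C|$. Combining this with the counting bound yields $|D| \leq |X_C| \times |Y_C| \times 2^{n-2k}$, and because $D$ was arbitrary, the maximum code size is at most this value. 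The two bounds together give the claimed exact equality.

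I do not expect a genuine obstacle here, since the conceptual content is already carried by Theorem~\ref{lowerg.thm} and Lemma~\ref{graph.lem}, and this theorem essentially packages them into a tight optimality statement. The one point demanding care is the notational overlap: the symbols $X_C$ and $Y_C$ in the theorem statement denote a fixed \emph{maximizing} independent set, whereas the independent set arising from the arbitrary code $D$ must be given a distinct name (as I have done with $X_D, Y_D$) so that the inequality $|X_D| \times |Y_D| \leq |X_C| \times |Y_C|$ reads correctly.
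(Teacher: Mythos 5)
Your proposal is correct and follows essentially the same route as the paper: the lower bound via Theorem~\ref{lowerg.thm} applied to the maximizing independent set, and the upper bound via Lemma~\ref{graph.lem} together with the observation that a code is determined by its prefixes, middles, and suffixes. If anything, your write-up makes explicit the counting step $|D|\leq |\PPP|\times|\SSS|\times 2^{n-2k}$ and the renaming of the independent set arising from an arbitrary code, both of which the paper leaves implicit.
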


\begin{proof}
Suppose $C$ is a $(1,k)$-overlap-free code in $F^n$. Let $\PPP$ denote all the $k$-prefixes of $n$-tuples in $C$ and let $\SSS$ denote all the $k$-suffixes of $n$-tuples in $C$. Lemma \ref{graph.lem} asserts that 
$X_C \cup Y_C$ is an independent set of vertices in $G_k$. To maximize the size of $C$, we would include 
all $n$-tuples of the form $p \parallel x \parallel s$ where $p \in \PPP$, $s \in \SSS$, and $x \in F^{n-2k}$. From Theorem \ref{lowerg.thm}, this (optimal) code has size $|X_C| \times |Y_C| \times 2^{n-2k}$.
\end{proof}

\begin{Example}
{\rm
Suppose $k=2$. By examining the graph $G_2$ depicted in Figure \ref{fig1and2}, it is not hard to see that 
the only independent sets of size $4$ are $X$ and $Y$. Hence, the maximum value of $|X_C| \times |Y_C|$ is obtained when $|X_C|=2$ and  $|Y_C| = 1$ or when $|X_C|=1$ and  $|Y_C| = 2$. One optimal solution is
$X_C = \{x_{00},x_{01}\}$ and $Y_C = \{y_{11}\}$  (see the highlighted vertices in Figure \ref{fig1and2}). Therefore the maximum size of a $(1,2)$-overlap-free code in $F^n$ is $2^{n-3}$. In other words, the \textsf{Doubling Construction} is optimal for $k=2$.
}
\end{Example}

\begin{Remark}
{\rm
The proof of Theorem \ref{optimal.thm} uses the construction from Section \ref{sec3}. In Section \ref{sec3}, we inductively constructed independent sets $X_C \cup Y_C$ where we maximized 
$|X_C| \times |Y_C|$ at each step of the process. But it does not necessarily follow that the resulting values of $|X_C| \times |Y_C|$ are the maximum possible. In fact we will see situations where this is not the case.
}
\end{Remark}

\begin{figure}
\begin{center}

\begin{tikzpicture}[scale=0.35]

\draw [very thick] (0,0) -- (10,0) -- (0,5) -- (10,5) -- (0,10) -- (10,10)-- (0,0);
\draw [very thick] (10,5) -- (0,15) -- (10,15) -- (0,10);
\draw [very thick] (0,5) -- (10,10);
\draw [very thick,fill = white] (0,0) circle [radius=.25];
\draw [very thick,fill = white] (0,5) circle [radius=.25];
\draw [very thick,fill = black] (0,15) circle [radius=.25];
\draw [very thick,fill = black] (0,10) circle [radius=.25];
\draw [very thick,fill = black] (10,0) circle [radius=.25];
\draw [very thick,fill = white] (10,10) circle [radius=.25];
\draw [very thick,fill = white] (10,5) circle [radius=.25];
\draw [very thick,fill = white] (10,15) circle [radius=.25];

\node at (-2,0) {$x_{11}$};
\node at (-2,5) {$x_{10}$};
\node at (-2,10) {$x_{01}$};
\node at (-2,15) {$x_{00}$};

\node at (12,0) {$y_{11}$};
\node at (12,5) {$y_{10}$};
\node at (12,10) {$y_{01}$};
\node at (12,15) {$y_{00}$};

\end{tikzpicture}

\end{center}
\caption{The graph $G_2$, with nodes from an independent set highlighted}
\label{fig1and2}
\end{figure}
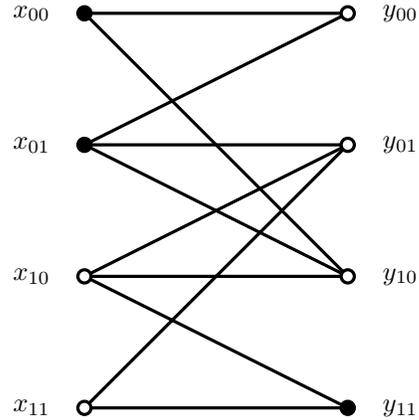

The graph $G_k$ has $2^{k+1}$ vertices. If we exhaustively search for an ``optimal'' independent set, this approach will quickly become infeasible as $k$ increases. This can be done for a few small values of $k$, however. The approach we take is to identify some nice structure in optimal independent sets for small $k$ and then generalize the structure to larger values of $k$. 

Suppose that $X_C \cup Y_C$ is an independent set of vertices in $G_k$, where $X_C \subseteq X$ and $Y_C \subseteq Y$. If $X_C \neq \emptyset$
and $Y_C \neq \emptyset$, then we say that $X_C \cup Y_C$ is a \emph{non-trivial} independent set. Now we present an  upper bound on the size of a non-trivial independent set in $G_k$. 

\begin{Theorem} 
\label{mis.thm} A non-trivial independent set in $G_k$ has size at most $2^{k-1}+1$.
\end{Theorem}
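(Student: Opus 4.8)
```latex
The plan is to understand the structure of the bipartite graph $G_k$ well enough to bound any non-trivial independent set. Recall that $x_p$ and $y_s$ are joined precisely when some prefix of $p$ equals some suffix of $s$. The key observation I would pursue first is: \emph{when is $x_p y_s$ a non-edge?} A non-edge means that \emph{no} prefix of $p$ (of any length $1 \leq t \leq k$) coincides with the corresponding suffix of $s$. In particular, taking $t=1$, the first symbol $p_1$ must differ from the last symbol $s_k$; and taking $t=k$, we need $p \neq s$. So a non-trivial independent set $X_C \cup Y_C$ forces, for every $p$ with $x_p \in X_C$ and every $s$ with $y_s \in Y_C$, that $p_1 \neq s_k$. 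Since the alphabet is binary, this is very restrictive: all prefixes $p$ of tuples in $X_C$ must begin with the same symbol, and all suffixes $s$ in $Y_C$ must end with the opposite symbol. Without loss of generality, I would assume every $p \in \PPP$ begins with $0$ and every $s \in \SSS$ ends with $1$.

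With this normalization in hand, the strategy is a counting/partitioning argument on the $2^k$ tuples of $F^k$. First I would note that $X_C$ lives among tuples starting with $0$ (there are $2^{k-1}$ of these) and $Y_C$ lives among tuples ending with $1$ (again $2^{k-1}$ of these). These two sets of tuples overlap in the $2^{k-2}$ tuples that both start with $0$ and end with $1$. The crude bound $|X_C| \leq 2^{k-1}$ and $|Y_C| \leq 2^{k-1}$ is far too weak, so the real work is to show the two sets cannot both be large. The cleanest route I would try is to set up an injection or a pairing between elements forced out of $X_C$ and elements present in $Y_C$ (or vice versa), driven by the edge condition at intermediate overlap lengths $t = 2, \dots, k-1$. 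Concretely, the edges coming from length-$t$ overlaps relate a tuple whose length-$t$ prefix is some word to tuples whose length-$t$ suffix is that same word; I would track how many tuples each chosen element of $Y_C$ "blocks" in $X$ and argue that the total blocked plus the size of $X_C$ must account for enough of $F^k$ to force $|X_C| + |Y_C| \leq 2^k$ at minimum, and then sharpen.

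To actually reach $2^{k-1}+1$ rather than merely $2^k$, I expect I need an extremal/exchange argument. The bound $2^{k-1}+1$ is suggestive: it is attained, for instance, by taking all $2^{k-1}$ tuples starting with $0$ as $\PPP$ (so $|X_C| = 2^{k-1}$) together with a single carefully chosen suffix giving $|Y_C| = 1$, or some balanced split summing to $2^{k-1}+1$. So the claim to prove is really $|X_C| + |Y_C| \leq 2^{k-1}+1$. I would try to prove this by showing that the self-loop-like constraints (a codeword overlapping with itself) together with the cross constraints mean that choosing an element into $Y_C$ that ends in $1$ forbids at least its "reverse-complement-type" partner from $X_C$, and that after the normalization the remaining freedom forms a structure where at most one element can be shared freely. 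A promising concrete mechanism: consider the map sending a tuple $a_1 a_2 \cdots a_k$ to $a_2 \cdots a_k \star$; the duplication phenomenon highlighted in the \textsf{Doubling Construction} (a length-$(k-1)$ suffix of a $\PPP$-tuple matching a length-$(k-1)$ prefix of an $\SSS$-tuple) is exactly the edge structure at $t=k-1$, and exploiting this self-similarity recursively is likely the engine of the proof.

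The main obstacle, I expect, is the transition from the weak additive bound $|X_C|+|Y_C| \leq 2^k$ to the sharp $2^{k-1}+1$. Getting the extra factor-of-two savings requires pinning down exactly which configurations of prefixes and suffixes are simultaneously admissible across \emph{all} overlap lengths $1 \leq t \leq k$ at once, not just at the extremes $t=1$ and $t=k$. I anticipate that the cleanest formalization will involve identifying a canonical element (accounting for the "$+1$") that can belong to both the prefix-side and suffix-side count, with everything else forced into an exact complementary split; verifying that this is the genuine extremal structure, and that no clever choice beats it, is where the careful casework or a slick injectivity argument will be needed.
```
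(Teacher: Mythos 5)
Your opening reduction is exactly the paper's first step, and it is carried out correctly: the length-$1$ overlap condition forces every $p$ with $x_p\in X_C$ to begin with one symbol and every $s$ with $y_s\in Y_C$ to end with the other, so one may restrict to the subgraph $G'$ induced on $X_0\cup Y_1$ (tuples beginning with $0$ on one side, ending with $1$ on the other), with $2^{k-1}$ vertices in each part. From that point on, however, your text is a plan rather than a proof, and the decisive idea is missing. The paper does not need an extremal or exchange argument, nor any sharpening of an additive $2^k$ bound: it simply exhibits a matching of size $2^{k-1}-1$ inside $G'$. Once such a matching exists the theorem is immediate, since an independent set contains at most one endpoint of each matching edge plus the two unmatched vertices, giving $(2^{k-1}-1)+2=2^{k-1}+1$. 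Your ``injection or pairing'' remark points in this direction, but you never produce the pairing, and the alternative route you sketch (counting how many vertices each element of $Y_C$ ``blocks'') would, as you yourself concede, only yield $|X_C|+|Y_C|\leq 2^k$ without further ideas.

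The matching is concrete and is the content you would need to supply. First, each of the $2^{k-2}$ tuples $p$ with $p_1=0$ and $p_k=1$ gives an edge $x_py_p$, because $p$ is a $k$-prefix of itself and a $k$-suffix of itself; these $2^{k-2}$ edges are pairwise disjoint. Second, one matches the $2^{k-2}-1$ tuples $p$ with $p_1=p_k=0$, $p\neq 00\cdots0$, against the $2^{k-2}-1$ tuples $s$ with $s_1=s_k=1$, $s\neq 11\cdots1$: writing $p=0\parallel\mathbf{a}\parallel1\parallel\mathbf{b}\parallel0$ with $\mathbf{a}$ the initial run of zeros, the paper pairs $x_p$ with $y_s$ for $s=1\parallel\mathbf{b}\parallel0\parallel\mathbf{a}\parallel1$, which is an edge since $p$ begins and $s$ ends with $0\parallel\mathbf{a}\parallel1$. (Checking that this correspondence really is a bijection between the two families is exactly the kind of ``careful casework or slick injectivity argument'' you deferred --- and it is genuinely delicate, e.g.\ for $k=4$ both $0100$ and $0010$ are sent to $1001$ by the formula as written, so one must argue for the existence of a system of distinct representatives rather than rely on the formula verbatim.) Only $x_{00\cdots0}$ and $y_{11\cdots1}$ are left unmatched, which is where the ``$+1$'' comes from; your guess about the extremal configurations ($X_0$ together with $y_{11\cdots1}$, or $x_{00\cdots0}$ together with $Y_1$) is correct. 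As it stands, though, your proposal establishes only the normalization and leaves the theorem unproved.
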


\begin{proof}
Define 
$X_i = \{x_p : p_1 = i\}$, for $i = 0,1$. Also, define 
$Y_i = \{y_s : s_k = i\}$, for $i = 0,1$. Thus $X_i$ consists of all vertices in $X$ corresponding to $k$-tuples beginning with $i$ and $Y_i$ consists of all vertices in $Y$ corresponding to $k$-tuples ending with $i$. Suppose that $X_C \cup Y_C$ is a non-trivial independent set of vertices in $G_k$; hence $X_C \neq \emptyset$
and $Y_C \neq \emptyset$. Suppose without loss of generality that there is an $x_p \in X_0 \cap X_C$. Then $Y_C \cap Y_0 = \emptyset$ and hence $Y_C \subseteq Y_1$. Since $Y_C \neq \emptyset$, we have $X_C \cap X_1 = \emptyset$ and hence $X_C \subseteq X_0$.

Therefore, we can restrict our attention to the subgraph $G'$ of $G$ induced by the vertices in 
$X_0 \cup Y_1$. $G'$ has $2^{k-1}$ vertices in each part of its partition.  We show that $G'$ contains a matching $M$ of size $2^{k-1}-1$. 

First, for the $2^{k-2}$ $k$-tuples $p$ such that $p_1 = 0$ and $p_k = 1$, we match $x_p$ with $y_p$.
The remaining $2^{k-2}$ $k$-tuples $p$ such that $x_p \in X_0$ have $p_1 = p_k = 0$ (call this set $\PPP'$), and the remaining $2^{k-2}$ $k$-tuples $s$ such that $y_s \in Y_1$ have $s_1 = s_k = 1$ (call this set $\SSS'$). We ignore the all-$0$ $k$-tuple in $\PPP'$ and the all-$1$ $k$-tuple in $\SSS'$; there remain $2^{k-2}-1$ $k$-tuples in $\PPP'$ and $2^{k-2}-1$ $k$-tuples in $\SSS'$.

Any $k$-tuple in $\PPP'$ can be written uniquely in the form $p = 0 \parallel \mathbf{a} \parallel 1 \parallel \mathbf{b} \parallel 0$, where
$\mathbf{a}$ is a (possibly empty) sequence of $0$'s and $\mathbf{b}$ is an arbitrary binary sequence. For each such $k$-tuple, we observe that there is an edge in $G'$ from $x_p$ to $y_s$, where $s = 1 \parallel \mathbf{b} \parallel 0 \parallel \mathbf{a} \parallel 1$, because $p$ begins with $0 \parallel \mathbf{a} \parallel 1$ and $s$ ends with $0 \parallel \mathbf{a} \parallel 1$. This creates $2^{k-2}-1$ additional matching edges.

We have constructed a matching of size $2^{k-1}-1$.
Since there are two unmatched vertices in $G'$, this immediately implies that the maximum size of a non-trivial independent set in $G'$ (and hence in $G_k$) is at most $2^{k-1}+1$.
\end{proof}

\begin{Remark}
The bound proven in Theorem \ref{mis.thm} is tight. This can be seen by observing that $\{ 00\cdots 0\} \cup Y_1$ is an independent set of size $2^{k-1}+1$.
\end{Remark}

\begin{Corollary}
\label{upper.cor}
For $k \geq 2$, it holds that 
\begin{align*}
C(k,n)&\leq  (2^{k-2}+1) \times 2^{k-2} \times 2^{n-2k}\\
&= 2^{n-4} + 2^{n-k-2}.
\end{align*}
\end{Corollary}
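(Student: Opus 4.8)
The plan is to combine the exact characterization in Theorem \ref{optimal.thm} with the cardinality bound in Theorem \ref{mis.thm}, thereby reducing the whole problem to an elementary optimization of a product of two integers subject to a bound on their sum. By Theorem \ref{optimal.thm}, the quantity $C(k,n)$ equals $|X_C| \times |Y_C| \times 2^{n-2k}$, where $X_C \cup Y_C$ is an independent set of $G_k$ for which the product $|X_C| \times |Y_C|$ is as large as possible. Hence it suffices to produce an upper bound on this product and then substitute it into the formula.

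First I would note that any product-maximizing independent set must be non-trivial: if either $X_C$ or $Y_C$ were empty, the product would be zero (and the corresponding code empty), whereas the \textsf{Doubling Construction} already yields non-trivial sets with positive product. So we may assume $X_C \neq \emptyset$ and $Y_C \neq \emptyset$, and Theorem \ref{mis.thm} then applies to give $|X_C| + |Y_C| \leq 2^{k-1}+1$.

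Next I would maximize $|X_C| \times |Y_C|$ over all pairs of positive integers whose sum is at most $2^{k-1}+1$. For a fixed sum the product of two positive integers is largest when the two factors are as nearly equal as possible; since $2^{k-1}+1$ is odd, the most balanced split is $2^{k-2}+1$ and $2^{k-2}$ (both integers because $k \geq 2$), so the product is at most $(2^{k-2}+1) \times 2^{k-2}$. Substituting into the formula from Theorem \ref{optimal.thm} gives
\[
C(k,n) \leq (2^{k-2}+1) \times 2^{k-2} \times 2^{n-2k},
\]
and expanding the right-hand side yields $2^{n-4} + 2^{n-k-2}$, as claimed.

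This argument is essentially routine once the two cited theorems are in hand; the only point needing slight care is the discrete optimization, where one must justify that among integer pairs with a fixed odd sum the balanced split genuinely maximizes the product (the integer form of the AM--GM inequality), rather than carelessly using the continuous optimum $\bigl((2^{k-1}+1)/2\bigr)^2$. I would therefore phrase the optimization step explicitly in terms of integers to avoid that pitfall.
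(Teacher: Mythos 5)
Your proposal is correct and follows the same route as the paper: combine Theorem \ref{optimal.thm} with the bound $|X_C|+|Y_C|\leq 2^{k-1}+1$ from Theorem \ref{mis.thm} and observe that, the sum being odd, the product is maximized by the split $2^{k-2}+1$ and $2^{k-2}$. Your explicit remarks on non-triviality of the maximizing independent set and on the integer (rather than continuous) optimization merely fill in details the paper leaves implicit.
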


\begin{proof}
This is a straightforward application of Theorems \ref{optimal.thm} and \ref{mis.thm}. When $k \geq 2$, the value $2^{k-1}+1$ is odd. Therefore we maximize the product $|X_C| \times |Y_C|$ by taking 
\begin{align*}
|X_C| &= 2^{k-2}+1\\
\shortintertext{and}
|Y_C| &=  2^{k-2}
\end{align*}
(or vice versa).\qedhere
\end{proof}

We note that the upper bound proven in Corollary \ref{upper.cor} is weaker than the bound proven in Theorem \ref{1k.thm}.

\subsection{Results for small values of $k$}
\label{small.sec}

Let $I(k)$ denote the maximum size of a non-trivial independent set in $G_k$.
Table \ref{t0} summarizes the exact values of $I(k)$ and $C(k,n)$ for $k \leq 6$.

\begin{table}
\caption{Exact values of $I(k)$ and $C(k,n)$ for $k \leq 6$}
\label{t0}
\begin{center}
\begin{tabular}{c|c|r@{}l}
$k$ & $I(k)$ & 
\multicolumn{2}{c}{$C(k,n)$}  \\ \hline
$1$ & $2$ & &$2^{n-2}$ \\
$2$ & $3$ & $2$&$\times 2^{n-4}$ \\
$3$ & $5$ & $6$&$\times 2^{n-6}$  \\
$4$ & $9$ & $20$&$\times 2^{n-8}$  \\
$5$ & $16$ & $64$&$\times 2^{n-10}$ \\
$6$ & $30$ & $216$&$\times 2^{n-12}$  \\
\hline
\end{tabular}
\end{center}
\end{table}

It is clear that $I(1) = 2$ and the \textsf{Doubling Construction} is  optimal for $k=1$. 
Corollary \ref{upper.cor} shows that the \textsf{Doubling Construction} is optimal for $2 \leq k \leq 4$, and it also yields the exact values of $I(k)$ for these $k$. 

For $k=5$, an exhaustive search shows that $I(5) = 16$. From this, it follows that $C(5,n) \leq 8 \times 8 \times 2^{n-10} = 64 \times 2^{n-10}$. On the other hand, from the \textsf{Doubling Construction}, $C(5,n)\geq 2^{n-4}=64\times 2^{n-10}$, and so the \textsf{Doubling Construction} is again optimal. 
For $k=6$, Theorem \ref{mis.thm} shows that $I(6) \leq 33$ and Corollary \ref{upper.cor} states that 
\[C(n,6) \leq 2^{n-4} + 2^{n-8} = 272 \times 2^{n-12}.\]
However, this is not a tight bound, as we discuss below. The \textsf{Doubling Construction} yields a non-trivial independent set of size $29$ with $14$ vertices in one part and $15$ vertices in the other part. Hence, 
\[C(6,n) \geq 15 \times 14 \times 2^{n-12} 
= 210 \times 2^{n-12}.\]
But it turns out that there is a non-trivial independent set of size $30$ with $12$ vertices in one part and $18$ vertices in the other part. This leads to a larger $(1,6)$-overlap-free code because
$18 \times 12 > 15 \times 14$. The resulting lower bound is 
\[
C(6,n) \geq 18 \times 12 \times 2^{n-12} = 216 \times 2^{n-12}.
\]
This solution is in fact optimal, as was verified by an exhaustive search. 
Here are the $6$-tuples in the sets $\PPP$ and $\SSS$:

\[
\begin{array}{c}
\PPP \\ \hline
000000, 000001, 000010, 000011,\\
000100, 000101, 000110, 000111,\\
001000, 001001, 001010, 001011 
\end{array} 
\]

\[
\begin{array}{c}
\SSS \\ \hline
001101, 001111, 010011, 010101,\\
010111, 011011, 011101, 011111,\\
100111, 101011, 101101, 101111,\\
110011, 110101, 110111, 111011,\\
111101, 111111
\end{array} 
\]

\section{The \textsf{$m$-minimum Construction}}
\label{m-min.sec}
\begin{table}
\caption{Results obtained from the \textsf{$m$-minimum Construction}}
\label{tab2}
\[\begin{array}{rrrr@{}l}
\hline
k   &    |\PPP| &  |\SSS|   &  
\multicolumn{2}{c}{C(k,n) \geq}\\
\hline
  2 & 1 & 2 & 2&\times 2^{n-4}\\
  3 & 2 & 3 & 6&\times 2^{n-6}\\
  4 & 4 & 5 & 20&\times 2^{n-8}\\
  5 & 8 & 8 & 64&\times 2^{n-10}\\
  6 & 12 & 18 & 216&\times 2^{n-12}\\
  7 & 24 & 31 & 744&\times 2^{n-14}\\
  8 & 44 & 60 & 2640&\times 2^{n-16}\\
  9 & 64 & 149 & 9536&\times 2^{n-18}\\
 10 & 128 & 274 & 35072&\times 2^{n-20}\\
 11 & 256 & 504 & 129024&\times 2^{n-22}\\
 12 & 512 & 927 & 474624&\times 2^{n-24}\\
 13 & 960 & 1823 & 1750080&\times 2^{n-26}\\
 14 & 1792 & 3644 & 6530048&\times 2^{n-28}\\
\hline
\end{array}
\]
\end{table}
For $k \geq 7$, exhaustive searches appear to be infeasible. So we have tried various 
techniques to find useful lower bounds. 
We first describe the \textsf{$m$-minimum Construction}, which has enabled us to find some good solutions. 
\begin{Construction}[\textsf{$m$-minimum Construction}]
\label{m-min.const}
Suppose $k$ is a given positive integer. For $m = 1, 2, \dots, 2^{k-1}$, we construct a code $D_m$ as follows:
\begin{itemize}
\item Let $\PPP$ consist of the first $m$ non-negative integers, represented as binary $k$-tuples (padded on the left with $0$'s if necessary, i.e., in big-endian form). Define 
$X_C = \{ x_p : p \in \PPP\}$.
\item Let $Y_C$ consist of all vertices in $Y$ that are adjacent to no vertices in $X_C$. 
Define $\SSS = \{s :  y_s \in Y_C\}$. 
\item Output the sets $\PPP$ and $\SSS$ for the code $D_m$ that maximizes the value of $|\PPP| \times |\SSS|$. The resulting $(1,k)$-overlap-free code will have size $|\PPP| \times |\SSS| 
\times 2^{n-2k}$.
\end{itemize}
\end{Construction}
\vbox{
Table \ref{tab2} summarizes results obtained from the \textsf{$m$-minimum Construction}. For $k \geq 6$, these are all improvements over the \textsf{Doubling Construction}.
The optimal solution for $k=6$ that we presented in Section \ref{small.sec} is precisely the code $D_{12}$ obtained from the \textsf{$m$-minimum Construction}.
For $k=7$, $D_{24}$ is the code found by the \textsf{$m$-minimum Construction}; it has $|\PPP| = 24$ and $|\SSS| = 31$:
}
\[
\begin{array}{c}
\PPP \\ \hline
0000000,0000001,0000010,0000011,\\
0000100,0000101,0000110,0000111,\\
0001000,0001001,0001010,0001011,\\
0001100,0001101,0001110,0001111,\\
0010000,0010001,0010010,0010011,\\
0010100,0010101,0010110,0010111
\end{array} 
\]
and
\[
\begin{array}{c}
\SSS \\ \hline
0011011,0011101,0011111,0100111,\\
0101011,0101101,0101111,0110011,\\
0110101,0110111,0111011,0111101,\\
0111111,1001101,1001111,1010011,\\
1010101,1010111,1011011,1011101,\\
1011111,1100111,1101011,1101101,\\
1101111,1110011,1110101,1110111,\\
1111011,1111101,1111111
\end{array} 
\]
This yields the lower bound \[C(7,n) \geq 744 \times 2^{n-14}.\]

\subsection{The \textsf{Zero Block Construction}}
\label{simon.const}
We now present the \textsf{Zero Block Construction}, which is closely related to the \textsf{$m$-minimum Construction}, and is inspired by the classical construction of non-overlapping codes due to Gilbert and Levenshtein~\cite{Gil60,Lev64,Lev70} which we discuss in Section~\ref{classical.sec}.
\begin{Construction}[\textsf{Zero Block Construction}]
Suppose $k$ is a given positive integer. For $z=1 , \dots ,  k-1$, we construct a code $C_z$ from a certain $X_C$ and $Y_C$ as follows:
\begin{itemize}
\item Let $\PPP$ consist of the first $2^{k-z}$ non-negative integers, represented as binary $k$-tuples. Note that every $p \in \PPP$ begins with a block of (at least) $z$ consecutive $0$'s. Define $X_C = \{ x_p : p \in \PPP\}$.
\item Let $\SSS$  consist of all binary $k$-tuples $s$ ending with a $1$ that do not contain $z$ consecutive $0$'s. 
Define 
$Y_C = \{ y_s : s \in \SSS\}$.
\item Output the sets $\PPP$ and $\SSS$ for the code $C_z$ that maximizes the value of $|\PPP| \times |\SSS|$. The resulting $(1,k)$-overlap-free code will have size 
\[ |\PPP| \times |\SSS| \times 2^{n-2k} 
=  |\SSS| \times 2^{n-k-z}.
\]
\end{itemize}
\end{Construction}

\begin{Lemma}
For $X_C$ and $Y_C$ as defined in Construction \ref{simon.const}, no vertex in $Y_C$ is adjacent to any vertex in $X_C$.
\end{Lemma}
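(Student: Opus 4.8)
The plan is to argue directly from the edge definition of $G_k$: vertices $x_p$ and $y_s$ are joined precisely when the $t$-prefix of $p$ equals the $t$-suffix of $s$ for some $t$ with $1 \le t \le k$. So it suffices to fix an arbitrary $p \in \PPP$ and an arbitrary $s \in \SSS$ and show that, for \emph{every} such $t$, the $t$-prefix of $p$ differs from the $t$-suffix of $s$. I would use the two defining features supplied by Construction \ref{simon.const}: every $p \in \PPP$ opens with a block of at least $z$ consecutive $0$'s (since $\PPP$ encodes $0, \ldots, 2^{k-z}-1$ as $k$-bit big-endian strings, forcing the top $z$ bits to vanish), whereas every $s \in \SSS$ ends in a $1$ and contains no run of $z$ consecutive $0$'s.

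The argument then splits on the size of $t$ relative to $z$. When $1 \le t \le z$, the $t$-prefix of $p$ is the all-zero string $0^t$, whose final symbol is $0$; but the final symbol of the $t$-suffix of $s$ is the last symbol of $s$, which is $1$ by construction, so the two strings cannot coincide. When $z < t \le k$, the $t$-prefix of $p$ still has its first $z$ symbols all equal to $0$, hence contains a run of $z$ consecutive $0$'s; were this prefix equal to the $t$-suffix of $s$, that run would sit inside $s$, contradicting the defining property of $\SSS$. These two cases exhaust all admissible $t$, so $x_p y_s$ is never an edge of $G_k$, and the Lemma follows.

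There is no serious obstacle here; the only point requiring a little care is confirming that the case split is genuinely exhaustive and that each clause of the definitions is doing the right work — namely that ``$s$ ends in $1$'' kills the short overlaps ($t \le z$) while ``$s$ has no $z$ consecutive zeros'' kills the long ones ($t > z$), and that the big-endian encoding really does guarantee the $z$ leading zeros asserted in the construction. Once those observations are recorded, the verification is immediate.
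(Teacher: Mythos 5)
Your proof is correct and follows essentially the same route as the paper's: the same case split on whether the overlap length $t$ is at most $z$ (where the trailing $1$ of $s$ clashes with the all-zero prefix of $p$) or exceeds $z$ (where the $z$-run of zeros at the start of $p$ cannot appear in any suffix of $s$). No issues.
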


\begin{proof}
Suppose $x_p \in X_C$ and $y_s \in Y_C$. We consider two cases. If $\ell \leq z$, then the $\ell$-prefix of $p$ consists of $\ell$ $0$'s. However, $s$ ends in a $1$, so the $\ell$-suffix of $s$  is not the same as the $\ell$-prefix of $p$. The second case is when $\ell \geq z+1$. Here an $\ell$-prefix of $p$ begins with  $z$ $0$'s. However, no $\ell$-suffix of $s$ contains $z$ consecutive $0$'s, so the $\ell$-suffix of $s$  is not the same as the $\ell$-prefix of $p$.
\end{proof}

 Thus, for any fixed value of $z$, the set $Y_C$ defined in Construction \ref{simon.const} is a subset of the set that would be chosen in Construction \ref{m-min.const} (the \textsf{$m$-minimum Construction}). So the \textsf{Zero Block Construction} cannot improve on the \textsf{$m$-minimum Construction}; however, it is an explicit construction and potentially easier to analyze.
We will consider a general bound that can be proven, as well as numerical computations for various values of $k$.

It remains to specify an appropriate value for $z$ and to investigate the size of $\SSS$.
It turns out that the number of 
binary $\ell$-tuples $s$ that do not contain $n$ consecutive $0$'s is given by an \emph{$n$-step Fibonacci number}. For a given value of $n \geq 2$, the \emph{$n$-step Fibonacci sequence} 
is defined recursively as follows. 
\begin{equation}
\label{fib.eq} F_{i}^{(n)}=
\begin{cases}
0 & \text{if $-n+2 \leq i \leq 0$}\\
1 & \text{if $i = 1$}\\
\displaystyle \sum_{j=1}^n F_{i-j}^{(n)} & \text{if $i \geq 2$.}
\end{cases}
\end{equation}
That is, each term in this sequence is the sum of the $n$ previous terms. 
It is easy to see that
\[F_{i}^{(n)} = 2^{i-2}\]
for $2 \leq i \leq n+1$. Also, it is easily verified that  
\[F_{n+2}^{(n)} = 2^{n}-1 \quad  \text{ and } \quad  F_{n+3}^{(n)} = 2^{n+1}-3.\]
For additional information about these sequences, see \cite{Dres,Wolf}.

The following  result is well-known. We provide a proof for completeness.
\begin{Lemma}
\label{seq.lem}
The number of 
binary $\ell$-tuples that do not contain $z$ consecutive $0$'s is $F_{\ell+2}^{(z)}$.
\end{Lemma}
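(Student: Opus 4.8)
The plan is to prove that the number of binary $\ell$-tuples avoiding $z$ consecutive $0$'s equals $F_{\ell+2}^{(z)}$ by establishing that the counting function satisfies the same recurrence and initial conditions as the $z$-step Fibonacci sequence (shifted by the index offset of $2$). Let me denote by $a_\ell$ the number of such $\ell$-tuples. The core observation is a first-block decomposition: any admissible tuple either starts with a $1$, or starts with a $0$, or with $00$, and so on, but crucially cannot begin with $z$ consecutive $0$'s. More precisely, I would classify an admissible tuple by the length $j$ of its initial run of $0$'s immediately followed by a $1$; this run length $j$ ranges over $0, 1, \dots, z-1$ (since $z$ consecutive $0$'s are forbidden anywhere, in particular at the start).

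The key step is the recurrence. Writing an admissible $\ell$-tuple as $0^{j}\,1\,w$ where $0 \leq j \leq z-1$ and $w$ is an admissible $(\ell - j - 1)$-tuple, I obtain $a_\ell = \sum_{j=0}^{z-1} a_{\ell - j - 1} = \sum_{i=1}^{z} a_{\ell - i}$, valid for $\ell$ large enough that all the sub-lengths are meaningful. This is exactly the defining recurrence of the $z$-step Fibonacci numbers in equation \eqref{fib.eq}. I would then check that setting $b_\ell := F_{\ell+2}^{(z)}$ yields the same recurrence $b_\ell = \sum_{i=1}^{z} b_{\ell-i}$ from the case $i \geq 2$ of \eqref{fib.eq}, after the reindexing.

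The remaining step is to verify enough base cases and boundary conventions so that $a_\ell = b_\ell$ is forced for all $\ell \geq 0$. I would confirm $a_0 = 1$ (the empty tuple vacuously avoids $z$ consecutive $0$'s) against $F_2^{(z)} = 1$, and observe that for $1 \leq \ell \leq z-1$ every $\ell$-tuple is admissible, giving $a_\ell = 2^\ell$, which matches $F_{\ell+2}^{(z)} = 2^\ell$ from the stated formula $F_i^{(z)} = 2^{i-2}$ valid for $2 \leq i \leq z+1$. One must also handle the convention that makes the recurrence work at small indices: the zero values $F_i^{(z)} = 0$ for $-z+2 \leq i \leq 0$ correspond to ``negative-length'' tuples contributing nothing, which aligns with the decomposition naturally if we set $a_\ell = 0$ for $\ell < 0$.

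The main obstacle I anticipate is purely bookkeeping: reconciling the index shift between the combinatorial count $a_\ell$ and the sequence value $F_{\ell+2}^{(z)}$, and ensuring the recurrence's range of validity meshes with the Fibonacci boundary conditions without an off-by-one error at the transition near $\ell = z-1$ and $\ell = z$. The combinatorial content is elementary; the care required is in matching initial conditions so that induction cleanly closes. I would carry it out by a straightforward induction on $\ell$, with the base cases $0 \leq \ell \leq z-1$ verified directly and the inductive step supplied by the decomposition recurrence.
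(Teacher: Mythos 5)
Your proof is correct and takes essentially the same route as the paper: both establish the $z$-step Fibonacci recurrence by decomposing an admissible tuple according to a maximal zero-run adjacent to a $1$ (you peel off the leading block $0^j1$ with $0\leq j\leq z-1$, the paper peels off the trailing block $1 0^{i-1}$ --- mirror images of the same idea), then close by induction after checking the base cases $a_\ell=2^\ell$ for $\ell<z$. The index bookkeeping you flag as the main risk works out exactly as you describe, with the recurrence correctly producing $2^z-1$ at $\ell=z$, which is the value the paper verifies directly as an additional base case.
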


\begin{proof}
Denote the number of 
binary $\ell$-tuples that do not contain $z$ consecutive $0$'s by $g(\ell,z)$. Then it is clear that
$g(\ell,z) = 2^{\ell}$, if $1 \leq \ell < z$,
and
$g(z,z)=2^z - 1.$
Thus $g(\ell,z) = F_{\ell+2}^{(z)}$ if
$1 \leq \ell \leq z$.

Next, consider $g(\ell,z)$ for some $\ell > z$. We partition the set of all binary $\ell$-tuples that do not contain $z$ consecutive $0$'s into $z$ disjoint subsets, denoted by $W_i$, $i = 1 , \dots , z$. For $1 \leq i \leq z$, the set $W_i$ consists of all the $\ell$-tuples that end with a $1$ followed by $i-1$ $0$'s. It is clear that $|W_i| = g(\ell-i,k)$ for $1 \leq i \leq z$. Hence, 
\[ g(\ell,z) = \sum_{i=1}^z g(\ell-i,z)\] whenever $\ell > z$.
We can assume by induction that $g(\ell-i,z) = F_{\ell-i+2}^{(z)}$ for $1 \leq i \leq z$.
So \[ g(\ell,z) = \sum_{i=1}^z F_{\ell-i+2}^{(z)} = F_{\ell+2}^{(z)},\]
from (\ref{fib.eq}), as desired.
\end{proof}

The number of choices for $s \in \SSS$ is exactly $F_{k+1}^{(z)}$. Thus we have the following result.

\begin{Theorem}
\label{zeroblock.thm}
The size of the code obtained from the \textsf{Zero Block Construction} is 
\begin{equation}
\label{zeroblock.eq}
 \max \left\{ F_{k+1}^{(z)} \times 2^{n-k-z} : 1 {\leq} z {\leq} k-1 \right\} .
\end{equation}
\end{Theorem}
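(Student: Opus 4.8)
The plan is to reduce the statement to a clean count of the suffix set $\SSS$ for each fixed value of $z$, and then take the maximum over $z$. The construction already records that, for a fixed $z$, the resulting code has size $|\PPP| \times |\SSS| \times 2^{n-2k} = |\SSS| \times 2^{n-k-z}$, since $\PPP$ consists of the first $2^{k-z}$ non-negative integers and hence $|\PPP| = 2^{k-z}$. (Correctness of each $C_z$ as a $(1,k)$-overlap-free code is already guaranteed by the preceding lemma together with Theorem \ref{lowerg.thm}, so here I only need to account for its size.) Thus the entire content of the theorem is to establish that $|\SSS| = F_{k+1}^{(z)}$ and then to observe that the construction, by definition, outputs whichever $C_z$ is largest.

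First I would fix $z$ with $1 \leq z \leq k-1$ and count the elements of $\SSS$, namely the binary $k$-tuples $s$ that end with a $1$ and contain no run of $z$ consecutive $0$'s. The key observation is that a trailing $1$ can never be part of a run of zeros, so writing $s = s' \parallel 1$ with $s' \in F^{k-1}$, the tuple $s$ avoids $z$ consecutive $0$'s if and only if $s'$ does. Hence $|\SSS|$ equals the number of binary $(k-1)$-tuples with no run of $z$ consecutive $0$'s. Applying Lemma \ref{seq.lem} with $\ell = k-1$ then gives
\[
|\SSS| = F_{(k-1)+2}^{(z)} = F_{k+1}^{(z)}.
\]

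Substituting this count into the size expression shows that $C_z$ has size $F_{k+1}^{(z)} \times 2^{n-k-z}$. Since the factor $2^{n-2k}$ is independent of $z$, maximizing $|\PPP| \times |\SSS|$ over $z$ — which is precisely the selection rule of the \textsf{Zero Block Construction} — is the same as maximizing the code size, so the size of the output code is $\max\{ F_{k+1}^{(z)} \times 2^{n-k-z} : 1 \leq z \leq k-1\}$, as claimed. I do not anticipate a serious obstacle: the result is essentially a bookkeeping consequence of Lemma \ref{seq.lem}. The only point requiring a moment's care is the reduction in the suffix count, namely verifying that stripping the forced trailing $1$ leaves an otherwise unconstrained $(k-1)$-tuple, so that Lemma \ref{seq.lem} applies cleanly with the shifted index $\ell = k-1$ rather than $\ell = k$.
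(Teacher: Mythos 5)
Your proposal is correct and matches the paper's own (very brief) justification: the paper simply asserts that the number of choices for $s\in\SSS$ is exactly $F_{k+1}^{(z)}$ and cites the size formula from the construction, which is precisely the bookkeeping you carry out. Your explicit reduction --- stripping the forced trailing $1$ and applying Lemma~\ref{seq.lem} with $\ell=k-1$ --- is the intended (implicit) argument, just spelled out in more detail.
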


In order to obtain an explicit closed-form bound, it is probably more convenient to  work with a simple lower bound on 
the values $F_{k+1}^{(z)}$. 

\begin{Lemma}
\label{events.lem}
For $1\leq z \leq k-1$, the following bound holds:
 \[ F_{k+1}^{(z)} >(1-k\, 2^{-z}) 2^{k-1}.\] 
\end{Lemma}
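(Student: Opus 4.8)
The plan is to bound the complementary count by a union bound. By Lemma \ref{seq.lem}, the quantity $F_{k+1}^{(z)}$ is exactly the number of binary $(k-1)$-tuples that do not contain $z$ consecutive $0$'s (take $\ell = k-1$, so that $\ell+2 = k+1$). Since there are $2^{k-1}$ binary $(k-1)$-tuples in all, the claimed inequality is equivalent to showing that the number of binary $(k-1)$-tuples that \emph{do} contain a run of $z$ consecutive $0$'s is strictly less than $k\,2^{k-1-z}$.

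First I would set up the relevant events. For a uniformly random binary $(k-1)$-tuple $w = w_1 w_2 \cdots w_{k-1}$ and each starting position $i$ with $1 \le i \le k-z$, let $A_i$ be the event that $w_i = w_{i+1} = \cdots = w_{i+z-1} = 0$. A tuple contains $z$ consecutive $0$'s precisely when at least one $A_i$ occurs, so the event of interest is $\bigcup_{i=1}^{k-z} A_i$. Each $A_i$ fixes $z$ coordinates to $0$ and leaves the other $k-1-z$ coordinates free, so $\Pro[A_i] = 2^{-z}$.

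Next I would apply the union bound, obtaining
\[
\Pro\!\left[\bigcup_{i=1}^{k-z} A_i\right] \le \sum_{i=1}^{k-z} \Pro[A_i] = (k-z)\,2^{-z}.
\]
Multiplying through by $2^{k-1}$, the number of $(k-1)$-tuples containing $z$ consecutive $0$'s is at most $(k-z)\,2^{k-1-z}$, and hence
\[
F_{k+1}^{(z)} \ge 2^{k-1} - (k-z)\,2^{k-1-z} = \bigl(1 - (k-z)\,2^{-z}\bigr)\,2^{k-1}.
\]

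Finally I would upgrade this to the stated strict inequality. Since $z \ge 1$, we have $k-z < k$, so $(k-z)\,2^{-z} < k\,2^{-z}$ and therefore $1 - (k-z)\,2^{-z} > 1 - k\,2^{-z}$, which gives $F_{k+1}^{(z)} > (1 - k\,2^{-z})\,2^{k-1}$ as required. I expect no serious obstacle, as the argument is a clean union bound; the only points requiring care are bookkeeping — using the correct tuple length ($k-1$ rather than $k$) and the correct number of starting positions ($k-z$) — and noting that the strictness comes for free from $z \ge 1$, since that guarantees $k-z$ is strictly smaller than $k$.
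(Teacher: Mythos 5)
Your proof is correct and follows essentially the same route as the paper: both interpret $F_{k+1}^{(z)}$ as the count of $(k-1)$-tuples avoiding $0^z$ and apply a union bound over the starting positions of a forbidden zero-run, with strictness coming from the slack in the number of positions. The only cosmetic differences are that the paper phrases it probabilistically with an appended trailing $1$ and union-bounds over $k-1$ events (some of probability zero), whereas you use the sharper count of $k-z$ positions.
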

\begin{proof}
Choose a binary word $t$ of length $k-1$ randomly and uniformly and then append a $1$. Let $E_i$ be the `bad' event that $t$ contains $0^z$, starting at position~$i$. Note that $t$ is of the desired form if and only if none of the events $E_1,E_2,\ldots,E_{k-1}$ occur. But the probability of $E_i$ is at most $2^{-z}$ (indeed it is equal to this when $i\leq k-z$, and it is $0$ otherwise). So the probability that one or more of the $E_i$'s occurs is at most 
$(k-1)2^{-z}$. Hence the probability that none of the events $E_1,E_2,\ldots,E_{k-1}$ occur is at least $1-(k-1)/2^{z}$.
Since \[1-(k-1)2^{-z} > 1-k\, 2^{-z},\] the stated bound follows.
\end{proof}

Now, using equation (\ref{zeroblock.eq}) from Theorem \ref{zeroblock.thm},  
for a given value of $z$, we obtain a code of size at least
\begin{align*}
 (1-k\,&2^{-z})\times 2^{k-1} \times 2^{n-k-z} \\
&=  (1-k\, 2^{-z})\times 2^{n-z-1}\\
&= (2^{-z}(1-k\, 2^{-z}))2^{n-1}.
\end{align*}
The  function $f(z) = 2^{-z}(1-k\, 2^{-z})$ is maximized when $z=\log_2 2k$. Sadly, this is not always an integer. However, taking $z_0= \displaystyle \lfloor \log_2 2k \rceil $ (i.e., rounding $\log_2 2k$ to the nearest integer), we have 
\[\log_2 2k - 1/2 \leq z_0 \leq \log_2 2k + 1/2,\] so 
$2^{z_0}\in [ \sqrt{2}k,2\sqrt{2}k]$.  
It then follows that 
\[
f(z_0) \geq
\max \{ f(\log_2 2k - \tfrac{1}{2}) , f(\log_2 2k + \tfrac{1}{2})  \} .
\]
We have 
\begin{align*}
f(\log_2 2k - \tfrac{1}{2}) 
&= \frac{1}{\sqrt{2}k}\left(  1 - \tfrac{1}{\sqrt{2}} \right)
\approx \frac{1}{4.83k}\\
\intertext{and}
f(\log_2 2k + \tfrac{1}{2})
&= \frac{1}{2\sqrt{2}k}\left(  1 - \tfrac{1}{2\sqrt{2}} \right)
\approx \frac{1}{4.38k}.
\end{align*}

Hence, $f(z_0) \geq 1/(4.83k)$. Since the size of the resulting code is $f(z_0) \times 2^{n-1}$, we have the following theorem.

\begin{Theorem}
\label{gen1.thm}
There exists $z$ such that \[|C_z|>(1/9.67k)2^n;\] hence \[C(k,n) \geq (1/9.67k)2^n.\]
\end{Theorem}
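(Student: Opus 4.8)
The plan is to assemble the ingredients already developed immediately before the statement and to commit to one explicit choice of the free parameter $z$. First I would invoke Theorem \ref{zeroblock.thm}, which tells me that the \textsf{Zero Block Construction} with parameter $z$ produces a $(1,k)$-overlap-free code $C_z$ of size $F_{k+1}^{(z)}\times 2^{n-k-z}$; in particular $C(k,n)\ge |C_z|$ for every admissible $z$, so it suffices to exhibit a single $z$ for which $|C_z|$ exceeds the claimed quantity. The whole task then reduces to lower-bounding $F_{k+1}^{(z)}$ and optimizing over $z$.

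Next I would feed the bound of Lemma \ref{events.lem} into this size formula. Writing $f(z)=2^{-z}(1-k\,2^{-z})$, a one-line manipulation gives
\[
|C_z| = F_{k+1}^{(z)}\times 2^{n-k-z} > (1-k\,2^{-z})\,2^{k-1}\times 2^{n-k-z} = f(z)\,2^{n-1}.
\]
Hence everything comes down to showing that some integer $z$ achieves $f(z)\ge 1/(4.83k)$, since then $|C_z| > (1/4.83k)\,2^{n-1} = (1/9.66k)\,2^n > (1/9.67k)\,2^n$, and the slightly larger denominator $9.67$ comfortably absorbs the rounding hidden in the constant $4.83$.

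For the choice of $z$ I would take $z_0=\lfloor \log_2 2k \rceil$. Setting $u=2^{-z}$ one has $f = u - k u^2$, a downward parabola in $u$ with vertex at $u=1/(2k)$, so $f$ is unimodal in $z$ with real maximizer $z^\ast=\log_2 2k$; since $|z^\ast - z_0|\le \tfrac12$, the worst case for $f(z_0)$ over the feasible interval $2^{z_0}\in[\sqrt2\,k,\,2\sqrt2\,k]$ occurs at an endpoint. Evaluating the two endpoints (exactly as in the calculation preceding the theorem) gives $f\bigl(\log_2 2k-\tfrac12\bigr)\approx 1/(4.83k)$ and $f\bigl(\log_2 2k+\tfrac12\bigr)\approx 1/(4.38k)$, so $f(z_0)$ is bounded below by the smaller of these, namely $1/(4.83k)$. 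Combining this with the displayed inequality for $|C_{z_0}|$ yields the bound on $|C_z|$, and $C(k,n)\ge |C_{z_0}|$ then gives the second statement.

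The genuinely delicate bookkeeping, and the step I would flag as the main obstacle, is confirming that $z_0$ is actually \emph{admissible}, i.e.\ that $1\le z_0\le k-1$, and justifying the unimodality claim that legitimizes ``value at the nearest integer is at least the smaller neighbouring half-integer value.'' The unimodality is the elementary parabola argument above; the range condition can only fail for a handful of the smallest $k$, and those values are already pinned down exactly by the data in Tables \ref{t0} and \ref{tab2}. Thus the asymptotic argument is precisely what this theorem needs to supply, and no further analytic input beyond Theorem \ref{zeroblock.thm} and Lemma \ref{events.lem} is required.
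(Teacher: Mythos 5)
Your proposal is correct and follows essentially the same route as the paper: it combines Theorem~\ref{zeroblock.thm} with Lemma~\ref{events.lem}, sets $f(z)=2^{-z}(1-k\,2^{-z})$, takes $z_0=\lfloor \log_2 2k\rceil$, and bounds $f(z_0)$ by evaluating at the half-integer endpoints, exactly as in the computation preceding the theorem. The only addition is your explicit attention to the admissibility of $z_0$ and the unimodality of $f$, which the paper leaves implicit.
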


We now incorporate two tweaks to improve Theorem \ref{gen1.thm}.
The first is to define the events $E_1, E_2, \dots$ used in the proof of Lemma \ref{events.lem} a bit more carefully.

\begin{Lemma}
\label{events2.lem}
For $1\leq z \leq k-1$, the following bound holds:
 \[ F_{k+1}^{(z)} \geq (1-k\, 2^{-z-1}) 2^{k-1}.\] 
\end{Lemma}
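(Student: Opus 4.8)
The claim to prove is Lemma~\ref{events2.lem}: for $1 \leq z \leq k-1$, we have $F_{k+1}^{(z)} \geq (1 - k\,2^{-z-1})2^{k-1}$. This is a sharpening of Lemma~\ref{events.lem}, which gave the weaker bound $F_{k+1}^{(z)} > (1 - k\,2^{-z})2^{k-1}$. The improvement is by a factor of $2$ in the subtracted error term: we replace $k\,2^{-z}$ by $k\,2^{-z-1}$. The text explicitly signals the strategy — "define the events $E_1, E_2, \dots$ used in the proof of Lemma~\ref{events.lem} a bit more carefully." So the whole game is a more careful union-bound / inclusion–exclusion estimate.

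Let me recall the proof of Lemma~\ref{events.lem}. We choose a random binary word $t$ of length $k-1$ uniformly, then append a $1$ to get a word of length $k$. By Lemma~\ref{seq.lem}, the number of binary $k$-tuples ending in $1$ that do not contain $z$ consecutive zeros... wait, let me recompute the count. Actually $F_{k+1}^{(z)}$ counts binary $(k-1)$-tuples that avoid $0^z$. By Lemma~\ref{seq.lem}, the number of binary $\ell$-tuples with no $z$ consecutive zeros is $F_{\ell+2}^{(z)}$, so for $\ell = k-1$ we get $F_{k+1}^{(z)}$. Good. So $F_{k+1}^{(z)}/2^{k-1}$ is exactly the probability that a uniform random $(k-1)$-bit word $t$ avoids $0^z$.

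Now the events. In Lemma~\ref{events.lem} they set, on the word $t$ of length $k-1$ (with a $1$ appended, so effectively on a length-$k$ word whose last bit is fixed to $1$): $E_i$ is the event that $0^z$ starts at position $i$. They note $\Pr[E_i] = 2^{-z}$ for $i \leq k-z$ and $\Pr[E_i] = 0$ otherwise, because a run $0^z$ starting at position $i$ in the length-$k$ word needs positions $i, i+1, \dots, i+z-1$ all zero, and the last position (position $k$) is forced to be $1$. So the number of "live" events is $k-z$, each of probability $2^{-z}$, giving a union bound of $(k-z)2^{-z}$. Actually they bounded it more crudely by $(k-1)2^{-z}$.

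**The improvement.**

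The factor-of-two improvement strongly suggests we should count each occurrence of $0^z$ only once — i.e. define $E_i$ as the event that $0^z$ occurs as a \emph{maximal} or \emph{first} run, or more cleanly, count \emph{disjoint} occurrences so that the union bound becomes tighter, OR exploit that overlapping runs mean the "bad" words are overcounted by the naive bound. The cleanest interpretation of "a factor of $2$": instead of bounding $\Pr[\bigcup_i E_i]$ by $\sum_i \Pr[E_i]$, subtract off (via inclusion–exclusion, Bonferroni) a lower-order positive term, or redefine $E_i$ to be the event that a run of zeros of length exactly... Let me think about what gains exactly a factor $1/2$ asymptotically.

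Here is the plan I would pursue. Let me redefine the bad events to avoid double-counting adjacent positions. Let $F_i$ be the event that positions $i, \dots, i+z-1$ are all $0$ \emph{and} position $i-1$ is a $1$ (or $i=1$), i.e. a $0^z$ run whose left end is "anchored." Under this definition the events are more nearly disjoint, and crucially $\Pr[F_i]$ for the anchored version is $2^{-z-1}$ for interior $i$ (one extra constrained bit), which is exactly where the factor $2^{-z-1}$ enters. Summing the anchored probabilities and checking that every word containing $0^z$ triggers at least one $F_i$ should give $\Pr[\text{contains } 0^z] \leq \sum_i \Pr[F_i] \leq k\,2^{-z-1}$, hence $F_{k+1}^{(z)} = 2^{k-1}\Pr[\text{avoids } 0^z] \geq (1 - k\,2^{-z-1})2^{k-1}$.

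**Key steps, in order.**

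First, I would set up the probability model exactly as in Lemma~\ref{events.lem}: pick $t \in \{0,1\}^{k-1}$ uniformly, append a $1$, so $F_{k+1}^{(z)} = 2^{k-1}\,\Pr[t \text{ avoids } 0^z]$. Second, I would define the refined anchored events $F_i$ (a $0^z$-run whose immediately preceding bit is $1$, with the convention that position $0$ counts as a $1$), observe that $t$ contains $0^z$ if and only if \emph{some} $F_i$ occurs, so $\Pr[\exists\, 0^z] = \Pr[\bigcup_i F_i] \leq \sum_i \Pr[F_i]$. Third, I would compute $\Pr[F_i]$: the boundary case $i=1$ has no preceding bit to constrain (probability $2^{-z}$), while interior anchored events constrain $z+1$ bits (probability $2^{-z-1}$); then I would carefully count the ranges of valid $i$ and verify that $\sum_i \Pr[F_i] \leq k\,2^{-z-1}$. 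The arithmetic here is the crux: I must confirm the boundary term plus the interior terms total at most $k\,2^{-z-1}$, which requires checking that the larger boundary probability is compensated by there being one fewer interior event, or by the appended $1$ killing some events at the right end.

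**The main obstacle.**

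The delicate part is the bookkeeping at the two ends of the word and showing the anchored events genuinely partition (or sub-additively cover) the occurrences without losing the factor $2$. The appended $1$ at position $k$ forces the last possible anchored run to start no later than position $k-z$, and the left-boundary event at $i=1$ is "cheaper" to satisfy (probability $2^{-z}$ rather than $2^{-z-1}$), so I expect the verification that $\sum_i \Pr[F_i] \leq k\,2^{-z-1}$ to hinge on a precise tally: one event of probability $2^{-z}$ and at most $k-z-1$ events of probability $2^{-z-1}$, giving total $2^{-z} + (k-z-1)2^{-z-1} = (2 + k - z - 1)2^{-z-1} = (k - z + 1)2^{-z-1} \leq k\,2^{-z-1}$, which holds precisely when $z \geq 1$. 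I expect the main effort to be pinning down these index ranges correctly rather than any hard inequality, and I would double-check the edge case $z = k-1$ where only one or two events survive.
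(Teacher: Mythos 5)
Your proposal is correct and is essentially identical to the paper's proof: the same probability model (uniform $t\in\{0,1\}^{k-1}$ with a $1$ appended), the same anchored events (a run $0^z$ immediately preceded by a $1$, with the unanchored boundary event at position $1$), and the same final tally $2^{-z}+(k-z-1)2^{-z-1}=(k-z+1)2^{-z-1}\leq k\,2^{-z-1}$ using $z\geq 1$. The one point you flag as needing verification --- that the first occurrence of $0^z$ is always anchored unless it starts at position $1$ --- is exactly the observation the paper makes to justify coverage by the events $E_1,\dots,E_{k-z}$.
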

\begin{proof}
As before, choose a binary word $t$ of length $k-1$ randomly and uniformly and then append a $1$. We define $E_1$ as before. However, for $2 \leq i \leq k-z$, we now define $E_i$ to be the event that there is a $1$ in position $i-1$, followed by $z$ $0$'s. It is not hard to see that if $t$ contains $z$ consecutive zeroes, then one of the events $E_1, \dots , E_{k-z}$ occurs. This is because the first occurrence of $z$ consecutive $0$'s must immediately follow a $1$, except when the first $z$ positions are all $0$'s.

We have $\Pro [E_1] = 2^{-z}$ and  $\Pro [E_i] = 2^{-z-1}$ for $2 \leq i \leq k-z$. Hence,
\begin{align*} 
\Pro [E_1 \vee \cdots \vee E_{k-z}] &\leq 2^{-z} 
+ (k{-}z{-}1)2^{-z-1}\\
 &= 2^{-z-1} (2 + k-z-1)\\
 &\leq  k\, 2^{-z-1},
\end{align*}
since $z \geq 1$.
Hence, 
\begin{align*} \Pro [\overline{E_1} \wedge \cdots \wedge \overline{E_{k-z}}] &\geq 1 - k\, 2^{-z-1}.
\end{align*}
The stated bound follows.
\end{proof}
Using equation (\ref{zeroblock.eq}) from Theorem \ref{zeroblock.thm},  
for a given value of $z$, we obtain a code of size at least 
\[
(2^{-z}(1-k\, 2^{-z-1}))2^{n-1}.\] 
In order to maximize the size of the code, we choose $z$ to maximize the function
\[g(z) = 2^{-z}(1-k\, 2^{-z-1}). \]
The maximum occurs when  $z=\log_2 k$, which of course might not be an integer.
We could consider an interval of length $1$ whose centre is at $\log_2 k$ (similar to our argument above), but we can do slightly better by considering a different interval (this is our second tweak). 

We choose $z$ to be an integer in the interval $\left[ \log_2 \frac{3k}{4}, \log_2 \frac{3k}{2} \right].$ Notice that this is again an interval of length $1$. We obtain a slightly better bound because $g\left(\log_2 \frac{3k}{4}\right) = g\left(\log_2 \frac{3k}{2}\right)$. In fact, 
\[ g\left(\log_2 \frac{3k}{4}\right) = g\left(\log_2 \frac{3k}{2}\right) = \frac{4}{9k} .\]
We immediately obtain the following theorem, which improves Theorem \ref{gen1.thm}.

\begin{Theorem}
\label{gen2.thm}
There exists $z$ such that 
\[ |C_z|>(2/9k)2^n;\]
 hence $C(k,n) \geq (2/9k)2^n$.
\end{Theorem}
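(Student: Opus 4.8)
The plan is to feed the explicit size estimate of the \textsf{Zero Block Construction} (Theorem \ref{zeroblock.thm}) through the sharpened Fibonacci lower bound of Lemma \ref{events2.lem}, and then to optimise the resulting single-variable expression over an admissible integer $z$. First I would record that, for each $z$ with $1 \leq z \leq k-1$, combining equation (\ref{zeroblock.eq}) with Lemma \ref{events2.lem} gives $|C_z| \geq F_{k+1}^{(z)}\,2^{n-k-z} \geq (1 - k\,2^{-z-1})\,2^{k-1}\,2^{n-k-z} = g(z)\,2^{n-1}$, where $g(z) = 2^{-z}(1 - k\,2^{-z-1})$. The entire statement then reduces to producing an integer $z$ in the range $\{1,\dots,k-1\}$ with $g(z) > 4/(9k)$, since this yields $|C_z| > (4/(9k))\,2^{n-1} = (2/(9k))\,2^n$, whence $C(k,n) \geq (2/9k)2^n$.

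The next step is to understand $g$ as a function of a real variable. Substituting $u = 2^{-z}$ turns $g$ into the downward parabola $u - (k/2)u^2$, maximised at $u = 1/k$, i.e.\ at $z = \log_2 k$; translating back to $z$, this shows $g$ is strictly increasing for $z < \log_2 k$ and strictly decreasing for $z > \log_2 k$, so $g$ is unimodal with an interior peak. I would then single out the interval $I = [\log_2(3k/4),\,\log_2(3k/2)]$. This is the right choice for two reasons: its length is $\log_2(3k/2) - \log_2(3k/4) = 1$, so $I$ must contain an integer $z_0$; and the peak $\log_2 k$ lies strictly inside $I$ since $3/4 < 1 < 3/2$. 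By unimodality the minimum of $g$ over $I$ is attained at an endpoint, and a direct substitution ($2^{-z} = 4/(3k)$ at the left end and $2^{-z} = 2/(3k)$ at the right end) shows that $g$ takes the \emph{same} value $4/(9k)$ at both endpoints. Consequently $g(z) \geq 4/(9k)$ for every $z \in I$, and in particular for the integer $z_0$.

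To obtain the strict inequality in the statement I would argue that neither endpoint of $I$ is an integer: if $\log_2(3k/4)$ were an integer $m$ then $3k = 2^{m+2}$, which is impossible because $3$ divides the left-hand side but not the right, and the same reasoning rules out $\log_2(3k/2) \in \mathbb{Z}$. Hence any integer $z_0 \in I$ lies strictly inside $I$, where $g$ strictly exceeds its endpoint minimum $4/(9k)$, giving $g(z_0) > 4/(9k)$ as required. A short bookkeeping check then confirms admissibility: for $k \geq 2$ one has $\log_2(3k/4) > 0$ and $\log_2(3k/2) < k$, so the integer $z_0$ satisfies $1 \leq z_0 \leq k-1$ and is a legal value in Theorem \ref{zeroblock.thm}.

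I expect the one genuinely subtle point to be the optimisation geometry, not any delicate estimate. The key is to resist choosing the symmetric length-one interval centred at $\log_2 k$ and instead use the asymmetric interval $I$ on which $g$ has equal endpoint values; this is precisely the second tweak that lets the guaranteed integer land where $g$ is provably at least $4/(9k)$, improving Theorem \ref{gen1.thm}. Once $I$ is chosen, the unimodality argument, the endpoint evaluation, the non-integrality (hence strictness) observation, and the feasibility range of $z_0$ are all routine.
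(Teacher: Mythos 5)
Your proposal is correct and follows essentially the same route as the paper: it feeds the bound of Lemma \ref{events2.lem} into equation (\ref{zeroblock.eq}), maximises $g(z)=2^{-z}(1-k\,2^{-z-1})$, and picks an integer in the length-one interval $\left[\log_2\frac{3k}{4},\log_2\frac{3k}{2}\right]$ where $g$ has equal endpoint values $4/(9k)$. Your additional observations on strictness (no endpoint is an integer since $3k$ is never a power of two) and on the admissibility $1\leq z_0\leq k-1$ are welcome details the paper leaves implicit.
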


When $k$ is a power of $2$, the function $g(z)$ is maximized at the integral value $z=\log_2 k$.
We obtain an improved result in this case.
\begin{Theorem}
\label{gen3.thm}
If  $k = 2^i$ for a positive integer $i$, then 
\[|C_i|> (1/4k)2^n;\] 
hence $C(k,n) {\geq} (1/4k)2^n$ for these values of $k$.
\end{Theorem}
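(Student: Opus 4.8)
The plan is to take advantage of the fact that, when $k=2^i$, the function $g$ from the discussion preceding Theorem \ref{gen2.thm} is maximized at the \emph{integer} $z=\log_2 k=i$, so that no interval-rounding argument is needed: one simply substitutes the optimal value directly. First I would check that $z=i$ is an admissible parameter in the \textsf{Zero Block Construction}, which requires $1\le z\le k-1$; this holds because $1\le i\le 2^i-1=k-1$ for every positive integer $i$. For this parameter the construction produces the code $C_i$, and by its size formula together with the count $|\SSS|=F_{k+1}^{(i)}$ recorded just before Theorem \ref{zeroblock.thm}, we have $|C_i|=F_{k+1}^{(i)}\times 2^{n-k-i}$.

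Next I would feed $z=i$ into Lemma \ref{events2.lem}. The reason for specializing to powers of two is that the error term collapses cleanly: $k\,2^{-i-1}=2^{i}\,2^{-i-1}=\tfrac12$, so Lemma \ref{events2.lem} yields $F_{k+1}^{(i)}\ge(1-\tfrac12)2^{k-1}=2^{k-2}$. Substituting this into the previous identity gives
\[
|C_i|\ge 2^{k-2}\times 2^{n-k-i}=2^{n-i-2}=\tfrac{1}{4k}\,2^n ,
\]
which is exactly the claimed order of magnitude; equivalently $g(i)=2^{-i}\bigl(1-k\,2^{-i-1}\bigr)=\tfrac{1}{2k}$ and the code has size $g(i)\times 2^{n-1}$. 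This part is purely mechanical once the two cited results are in hand.

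The delicate point, and where I would spend the most care, is upgrading ``$\ge$'' to the strict ``$>$'' in the statement. This strictness is inherited entirely from Lemma \ref{events2.lem}, whose proof bounds $\Pro[E_1\vee\cdots\vee E_{k-z}]$ by a union bound and then replaces $2^{-z-1}(k-z+1)$ by $2^{-z-1}k$; the latter replacement is strict exactly when $z\ge 2$, i.e.\ when $i\ge 2$. Hence for $i\ge 2$ we obtain $\Pro[E_1\vee\cdots\vee E_{k-z}]<k\,2^{-z-1}$, so $F_{k+1}^{(i)}>2^{k-2}$ and therefore $|C_i|>\tfrac{1}{4k}2^n$. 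I would treat the smallest case $i=1$ (that is, $k=2$, $z=1$) separately and honestly: there the construction gives $\PPP=\{00,01\}$, $\SSS=\{11\}$, so $|C_1|=2^{n-3}=\tfrac{1}{4k}2^n$ and the bound is attained with equality. Thus the substantive obstacle is not the estimate itself but tracing the strict inequality back to the slack in Lemma \ref{events2.lem}, while flagging the degenerate base case.
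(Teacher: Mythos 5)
Your proof is correct and follows essentially the same route as the paper, which offers no separate argument for this theorem: it is just the substitution of the integer value $z=\log_2 k=i$ into the bound $|C_z|\geq g(z)\,2^{n-1}$ coming from Lemma~\ref{events2.lem}, where $g(i)=2^{-i}\bigl(1-k\,2^{-i-1}\bigr)=1/(2k)$. Your bookkeeping on strictness is a genuine (if minor) improvement on the paper: as you observe, for $i=1$ (so $k=2$, $z=1$) the construction gives $|C_1|=2\times 1\times 2^{n-4}=2^{n-3}=(1/4k)2^n$ exactly, so the strict inequality claimed in the statement fails in that single case and should read ``$\geq$'' there (the conclusion $C(k,n)\geq (1/4k)2^n$ is unaffected), while for $i\geq 2$ your tracing of the slack in the step $2^{-z-1}(k-z+1)\leq k\,2^{-z-1}$ of Lemma~\ref{events2.lem} correctly delivers the strict inequality.
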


We note that the upper bound from Theorem \ref{1k.thm} is   $C(k,n) \leq (1/2k)2^n$, which is roughly a factor of two 
greater than the lower bound from Theorem \ref{gen3.thm} (when $k$ is a power of two).

It is also possible to obtain asymptotic bounds which are stronger than the explicit general bounds discussed above. We pursue this now.

Let $\ell$ and $z$ be integers, with $1\leq z<\ell$. For an integer $k$ with $0\leq k<\ell$, define $\phi(\ell,k,z)$ to be the number of binary sequences of length $\ell$ and weight~$k$ such that any two cyclically consecutive ones are separated by at least $z$ zeros. The following lemma gives bounds for $\phi(\ell,k,z)$ that are good when $k$ and $z$ are small compared to $\ell$:

\begin{Lemma}
\label{lem:phi_estimate}
Define $\ell$, $z$, $k$ and $\phi(\ell,k,z)$ as above. Then
\[
\binom{\ell}{k}-kz\binom{\ell}{k-1}\leq \phi(\ell,k,z)\leq \binom{\ell}{k}.
\]
\end{Lemma}
\begin{proof}
The lemma follows trivially in the case when $k\leq 1$, since $\phi(\ell,0,z)=1$ and $\phi(\ell,1,z)=\ell$. So we may assume that $k\geq 2$.

The upper bound follows since $\binom{\ell}{k}$ is the number of weight $k$ binary sequences of length $\ell$. The lower bound follows if we can show that there are at most $kz\binom{\ell}{k-1}$ weight $k$ binary sequences of length $\ell$ that have a zero run of length less than $z$. But all such sequences can be obtained (possibly more than once) in the following three-stage process. In Stage~1, choose a set of $k-1$ positions in the sequence to be equal to $1$. In Stage~2, choose one of these  $k-1$ positions, say position $i$. In Stage~3, choose a position $i+a\bmod\ell$ where $1\leq a\leq z$ and set this position equal to $1$; set the remaining positions to be zero. There are at most $\binom{\ell}{k-1}$ choices in the first stage, there are $k-1$ choices in the second stage and at most $z$ choices in the third stage. So
\begin{align*}
\phi(\ell,k,z)&\geq \tbinom{\ell}{k}-\tbinom{\ell}{k-1}(k-1)z\\[1em]
&\geq \tbinom{\ell}{k}-kz\tbinom{\ell}{k-1},
\end{align*}
as required.
\end{proof}

\begin{Corollary}
\label{cor:phi_binom}
Define $\ell$, $z$, $k$ and $\phi(\ell,k,z)$ as above. Then
\[
\frac{1}{k!}-\frac{2kz}{\ell}\leq
\frac{\phi(\ell,k,z)}{\ell^k}\leq \frac{1}{k!}
\]
\end{Corollary}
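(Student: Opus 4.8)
The plan is to convert the bounds on $\phi(\ell,k,z)$ from Lemma~\ref{lem:phi_estimate} into bounds on the normalized quantity $\phi(\ell,k,z)/\ell^k$ by dividing through by $\ell^k$ and estimating the binomial coefficients. The two inequalities will be handled separately, and both reduce to standard asymptotic comparisons between $\binom{\ell}{k}$ and $\ell^k/k!$.

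First I would establish the upper bound. Dividing $\phi(\ell,k,z)\leq\binom{\ell}{k}$ by $\ell^k$ gives $\phi(\ell,k,z)/\ell^k\leq\binom{\ell}{k}/\ell^k$. Since $\binom{\ell}{k}=\ell(\ell-1)\cdots(\ell-k+1)/k!$ is a product of $k$ factors each at most $\ell$, we have $\binom{\ell}{k}\leq\ell^k/k!$, and dividing by $\ell^k$ yields $\binom{\ell}{k}/\ell^k\leq 1/k!$. This immediately gives the claimed upper bound $\phi(\ell,k,z)/\ell^k\leq 1/k!$.

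For the lower bound, I would divide $\binom{\ell}{k}-kz\binom{\ell}{k-1}\leq\phi(\ell,k,z)$ by $\ell^k$ and bound each term. The key estimate is a lower bound on $\binom{\ell}{k}/\ell^k$. Writing $\binom{\ell}{k}/\ell^k=\frac{1}{k!}\prod_{j=0}^{k-1}(1-j/\ell)$, I would use the elementary inequality $\prod_{j=0}^{k-1}(1-j/\ell)\geq 1-\sum_{j=0}^{k-1}j/\ell=1-\binom{k}{2}/\ell\geq 1-k^2/(2\ell)$, so that $\binom{\ell}{k}/\ell^k\geq\frac{1}{k!}-\frac{k^2}{2\ell\,k!}$. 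For the subtracted term, I would bound $kz\binom{\ell}{k-1}/\ell^k\leq kz\cdot\frac{\ell^{k-1}}{(k-1)!}/\ell^k=\frac{kz}{(k-1)!\,\ell}=\frac{k^2 z}{k!\,\ell}$, using the same upper bound $\binom{\ell}{k-1}\leq\ell^{k-1}/(k-1)!$ as before. Combining these two estimates gives a lower bound on $\phi(\ell,k,z)/\ell^k$ that exceeds $\frac{1}{k!}$ minus a quantity of order $\frac{k^2(z+1)}{k!\,\ell}$.

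The main obstacle is reconciling the somewhat crude bound produced by the direct calculation with the clean target $\frac{1}{k!}-\frac{2kz}{\ell}$ claimed in the statement. The direct approach naturally produces error terms with $k!$ in the denominator and factors of $k^2$ rather than the stated $2kz$, so the final step requires checking that the claimed bound is genuinely weaker (hence implied) in the regime of interest where $k$ and $z$ are small relative to $\ell$. I expect that the intended reading is that the $\frac{1}{k!}$ term dominates and the correction term $\frac{2kz}{\ell}$ is a convenient and sufficiently generous upper bound on the true error; verifying that $\frac{k^2}{2\ell\,k!}+\frac{k^2 z}{k!\,\ell}\leq\frac{2kz}{\ell}$ (or an appropriate rearrangement) for the relevant ranges of parameters is the one place where care is needed, and it is likely where the authors' own argument streamlines the bookkeeping by absorbing constants directly.
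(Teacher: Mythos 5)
Your proposal is correct and follows essentially the same route as the paper: divide the bounds of Lemma~\ref{lem:phi_estimate} by $\ell^k$ and estimate the binomial coefficients, using $\binom{\ell}{k}\le \ell^k/k!$ for the upper bound and an elementary lower bound on $\binom{\ell}{k}/\ell^k$ (the paper uses $\binom{\ell}{k}\ge(\ell-k)^k/k!\ge \ell^k/k!-kz\ell^{k-1}$ and then crudely absorbs both error terms into $kz/\ell+kz/\ell$, exactly the bookkeeping you anticipate). The verification you flag does close: your two error terms sum to $\frac{k^2}{2\ell\,k!}+\frac{k^2z}{k!\,\ell}=\frac{k(1/2+z)}{(k-1)!\,\ell}\le\frac{k(1/2+z)}{\ell}\le\frac{2kz}{\ell}$, since $(k-1)!\ge 1$ and $1/2+z\le 2z$ for $z\ge 1$.
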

\begin{proof}
The upper bound follows from the upper bound of Lemma~\ref{lem:phi_estimate} and the 
inequality \[\tbinom{\ell}{k}\leq \ell^k/k!.\] For the lower bound, we use the lower bound of Lemma~\ref{lem:phi_estimate} and the same bound on a binomial coefficient to see that
\begin{align*}
\frac{\phi(\ell,k,z)}{\ell^k}&\geq \frac{\tbinom{\ell}{k}}{\ell^k} -\frac{zk}{\ell} \times \frac{\tbinom{\ell}{k-1}}{\ell^{k-1}}\\
&\geq \frac{\tbinom{\ell}{k}}{\ell^k}-\frac{kz}{(k-1)!\ell}\\
&\geq \frac{\tbinom{\ell}{k}}{\ell^k}-\frac{kz}{\ell}.
\end{align*}
The lower bound now follows since
\begin{align*}
\tbinom{\ell}{k}&\geq \tfrac{(\ell-k)^k}{k!}
\geq \tfrac{\ell^k-k^2\ell^{k-1}}{k!}\\
&\geq \tfrac{\ell^k}{k!} - k\ell^{k-1}
\geq \tfrac{\ell^k}{k!} -kz\ell^{k-1}. \qedhere
\end{align*} 
\end{proof}

\begin{Theorem}
\label{thm:cyclic_zeros}
For a positive integer $a$, define $\ell=2^a$ and $z=a-1$ \textup{(}so $2^{z+1}=\ell$\textup{)}. Let $\nu_a$ be the number of binary sequences of length $\ell$ that do not contain any cyclic runs of $z$ or more consecutive zeros. Then $\lim_{a\rightarrow\infty}\nu_a/2^\ell=1/e$ \textup{(}where $e$ is the base of the natural logarithm\textup{)}.
\end{Theorem}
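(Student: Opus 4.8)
The plan is to recast the count $\nu_a$ as a probability and apply inclusion--exclusion, recognizing the resulting terms as the quantities $\phi(\ell,j,z)$ already estimated in Corollary~\ref{cor:phi_binom}. Draw a uniformly random word $w\in\{0,1\}^\ell$, read cyclically. For $1\le i\le \ell$, let $B_i$ be the event that position $i-1$ equals $1$ while positions $i,i+1,\ldots,i+z-1$ (indices mod $\ell$) all equal $0$; that is, $B_i$ says that a maximal zero-run of length at least $z$ begins at position $i$. First I would verify the bookkeeping: a word other than $0^\ell$ contains a cyclic run of $z$ or more zeros precisely when some $B_i$ occurs, whereas $0^\ell$ has such a run but triggers no $B_i$. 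Hence
\[
\frac{\nu_a}{2^\ell}=\Pro\!\left[\textstyle\bigcap_{i=1}^{\ell}\overline{B_i}\right]-2^{-\ell}.
\]
Since $\Pro[B_i]=2^{-(z+1)}=2^{-a}=1/\ell$, the expected number of events is exactly $1$, so the Poisson heuristic already predicts the answer $e^{-1}$; the remaining work is to make this rigorous.

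Next I would expand by inclusion--exclusion, writing $\Pro[\bigcap_i\overline{B_i}]=\sum_{S\subseteq\{1,\ldots,\ell\}}(-1)^{|S|}\Pro[\bigcap_{i\in S}B_i]$. The key computation is that $\bigcap_{i\in S}B_i$ forces $1$'s at the positions $\{i-1:i\in S\}$ and $0$'s on the blocks $\{i,\ldots,i+z-1\}$; these constraints are consistent (no coordinate is forced to be both $0$ and $1$) if and only if consecutive elements of $S$ around the cycle are at distance at least $z+1$. For such a consistent set $S$ of size $j$ the constrained coordinates number $j(z+1)$ and are distinct, so $\Pro[\bigcap_{i\in S}B_i]=2^{-(z+1)j}=\ell^{-j}$, while inconsistent $S$ contribute $0$. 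The number of size-$j$ subsets of the cycle whose consecutive gaps are all at least $z+1$ is exactly $\phi(\ell,j,z)$ (i.e.\ $j$ ones with any two cyclically consecutive ones separated by at least $z$ zeros). Therefore
\[
\frac{\nu_a}{2^\ell}+2^{-\ell}=\sum_{j\ge 0}(-1)^j\,\frac{\phi(\ell,j,z)}{\ell^j},
\]
a sum that terminates once $j(z+1)>\ell$.

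It remains to pass to the limit. By Corollary~\ref{cor:phi_binom}, for each fixed $j$ we have $\phi(\ell,j,z)/\ell^j\to 1/j!$ as $a\to\infty$ (here $z=a-1=\log_2\ell-1$, so $z/\ell\to0$), and the $j$-th term tends to $(-1)^j/j!$; summing formally gives $\sum_{j\ge 0}(-1)^j/j!=e^{-1}$. The main obstacle, and the only real subtlety, is that the number of nonzero terms grows with $a$, so the convergence must be made uniform rather than term-by-term. I would handle this by splitting at a fixed cutoff $J_0$: on the head $j\le J_0$ the two-sided estimate of Corollary~\ref{cor:phi_binom} bounds the error by $\sum_{j\le J_0}2jz/\ell\le zJ_0(J_0+1)/\ell$, which vanishes for fixed $J_0$ as $a\to\infty$; on the tail $j>J_0$ I use the crude bound $\phi(\ell,j,z)/\ell^j\le\binom{\ell}{j}/\ell^j\le 1/j!$ together with the alternating-series tail estimate $|\sum_{j>J_0}(-1)^j/j!|\le 1/(J_0+1)!$, so both tail contributions are $O(1/(J_0+1)!)$. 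Given $\varepsilon>0$, choosing $J_0$ large enough to force the tails below $\varepsilon/2$ and then $a$ large enough to force the head error below $\varepsilon/2$ yields $|\nu_a/2^\ell-e^{-1}|<\varepsilon$ for all sufficiently large $a$, which proves the claim.
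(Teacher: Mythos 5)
Your argument is correct and follows essentially the same route as the paper: the same ``one preceded by $z$ zeros'' events, the same inclusion--exclusion identification of the $j$-th term with $(-1)^j\phi(\ell,j,z)/\ell^j$, and the same appeal to Corollary~\ref{cor:phi_binom} to pass to $\sum_j(-1)^j/j!=1/e$. The only cosmetic difference is that you truncate at a fixed cutoff $J_0$ and run an $\varepsilon/2$ argument, whereas the paper truncates at the growing cutoff $b=\lfloor\ell^{1/4}\rfloor$ and controls the truncation error via the Bonferroni inequalities; both are valid.
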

\begin{proof}
Let $X_i$ be the set of sequences 
\[
s = (s_0,s_1,\ldots,s_\ell)
\]
such that $s_i{=}1$ and $s_{i+1}{=}s_{i+2}{=}\cdots{=}s_{i+z}{=}0$. (Here we take subscripts modulo $\ell$.)

Note that $s$ has no cyclic runs of $z$ or 
more zeros if and only if $s$ is non-zero 
and $s\not\in X_i$ for $i \in L= 
\{ 0,1,2,...,\ell-1\}$. Hence
\begin{equation}
\label{eqn:nu}
\nu_a=\biggl|\,\overline{\bigcup_{i\in L} X_i}\,\biggr|-1.
\end{equation}

By the principle of inclusion-exclusion,
\begin{equation}
\label{eqn:PIE}
\biggl|\,\overline{\bigcup_{i \in L} X_i}\,\biggr|
=\sum_{k=0}^{\ell-1}\,(-1)^k
\sum_{\substack{I\subseteq L\\|I|=k}}\biggl|\bigcap_{i\in I} X_i\;\biggr|,
\end{equation}
where the partial sums involving $k$ on the right hand side are successively upper and lower bounds for the left hand side (this follows from the Bonferroni inequalities).

For a subset $I\subseteq L$, let 
\[t^I=(t^I_0,t^I_1,\ldots,t^I_{\ell-1})\]
be the indicator binary sequence for $I$, so
\[
t^I_i=\begin{cases} 1&\text{ if }i\in I,\\
0&\text{ otherwise}.
\end{cases}
\]
When $k\geq 1$ we see that
\[
\biggl|\bigcap_{i\in I} X_i\biggr|
= 2^{\ell-(z+1)k},
\]
when any two cyclically consecutive
ones in $t^I$
are separated
by at least $z$ zeros, and is $0$
otherwise.
So using the notation above Lemma~\ref{lem:phi_estimate}, we may simplify~\eqref{eqn:PIE} as
\vbox{
\begin{align}
\nonumber
\biggl|\,\overline{\bigcup_{i \in L} X_i}\,\biggr|
&=\displaystyle \sum_{k=0}^{\ell-1}(-1)^k \phi(\ell,k,z)2^{\ell-(z+1)k}\\
&=\displaystyle 2^\ell\left(\sum_{k=0}^{\ell-1}(-1)^k \tfrac{\phi(\ell,k,z)}{\ell^k}\right),
\label{eqn:simple_PIE}
\end{align}
since $2^{z+1}=\ell$ by our choice of $\ell$ and $z$.
}

Define $b=\lfloor \ell^{1/4}\rfloor$. We noted above that successive partial sums in the right hand side of~\eqref{eqn:simple_PIE} are upper and lower bounds for the left hand side, so truncating this sum after $b+1$ terms, we see that 
\begin{multline} 
\biggl|\,\overline{\bigcup_{i \in L} X_i}\,\biggr|
\left.-2^\ell\left(\sum_{k=0}^{b}(-1)^k \frac{\phi(\ell,k,z)}{\ell^k}\right)\right|\\[1em]
\leq 2^\ell \frac{\phi(\ell,b+1,z)}{\ell^{b+1}}
\leq \frac{2^\ell}{(b+1)!},
\label{eqn:PIE_truncate}
\end{multline}
the final inequality following by the upper bound of Corollary~\ref{cor:phi_binom}. Now, $z<b$ when $a$ is sufficiently large, since $z=a-1$ and $b\geq 2^{a/4}-1$. 
So, using the bounds in Corollary~\ref{cor:phi_binom}, 
\begin{align}
\nonumber
\left| 2^\ell\left(\sum_{k=0}^{b}(-1)^k\right.\right.
&\left.\left.\!\!\tfrac{\phi(\ell,k,z)}{\ell^k}\right)-2^\ell\sum_{k=0}^b \tfrac{(-1)^k}{k!}\right|\\
\nonumber
&\leq 2^\ell\sum_{k=0}^b\tfrac{2kz}{\ell}
\leq 2^\ell \sum_{k=1}^b\tfrac{2b^2}{\ell}\\
\label{eqn:two_truncations}
&=2^\ell\tfrac{2b^3}{\ell}\\&< 2^\ell\tfrac{2}{\ell^{1/4}}
\end{align}
whenever $a$ is sufficiently large. But the usual power series expansion for $1/e$ shows that
\begin{equation}
\label{eqn:e_approximation}
2^\ell\left|\frac{1}{e}-\sum_{k=0}^b\frac{(-1)^k}{k!}\right|\leq \frac{2^\ell}{(b+1)!}.
\end{equation}
Combining equations~\eqref{eqn:nu}, \eqref{eqn:PIE_truncate}, \eqref{eqn:two_truncations} and~\eqref{eqn:e_approximation} 
we see that $\nu_a=2^\ell(1/e+\epsilon)$,
where
\[
|\epsilon|\leq \frac{1}{2^\ell}+\frac{2}{(b+1)!}+\frac{2}{\ell^{1/4}}
\]
whenever $a$ is sufficiently large. In particular $\epsilon$ tends to zero as $a\rightarrow\infty$, and so the theorem follows.
\end{proof}
We remark that Schoeny \emph{et al.}~\cite[Subsection~V.B]{Schoe} prove a bound on the number of binary sequences with no (zero or one) runs of length $\log(2n)$, using a probabilistic construction. We wonder whether their bounds could be improved using techniques similar to those in the proof of Theorem~\ref{thm:cyclic_zeros}.

\begin{Corollary}
\label{cor:noncyclic_zeros}
For a positive integer $a$, define $\ell=2^a$ and $z=a-1$. Then 
\[\lim _{a \rightarrow \infty}  \frac{F_{\ell+2}^{(z)} }{2^{\ell}}   =   \frac{1}{e}\]
(where $e$ is the base of the natural logarithm).
\end{Corollary}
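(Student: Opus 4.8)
The plan is to compare the quantity $F_{\ell+2}^{(z)}$ with the quantity $\nu_a$ whose limiting behaviour is already settled by Theorem~\ref{thm:cyclic_zeros}. By Lemma~\ref{seq.lem}, $F_{\ell+2}^{(z)}$ is exactly the number of binary $\ell$-tuples containing no \emph{linear} run of $z$ consecutive zeros, whereas $\nu_a$ counts those containing no \emph{cyclic} run of $z$ consecutive zeros. Since a linear run is in particular a cyclic run, every sequence counted by $\nu_a$ is also counted by $F_{\ell+2}^{(z)}$, giving $\nu_a \le F_{\ell+2}^{(z)}$. The whole proof then reduces to showing that the difference $F_{\ell+2}^{(z)} - \nu_a$ is negligible compared with $2^\ell$, after which the result follows by sandwiching: dividing the inequalities by $2^\ell$ and letting $a \to \infty$ forces $F_{\ell+2}^{(z)}/2^\ell$ to the common limit $1/e$ supplied by Theorem~\ref{thm:cyclic_zeros}.

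To estimate the difference, I would observe that a sequence $s$ is counted by $F_{\ell+2}^{(z)}$ but not by $\nu_a$ exactly when $s$ has no linear run of $z$ zeros yet has a cyclic one; such a run must straddle the ``seam'' joining positions $\ell-1$ and $0$. Writing $i$ for the number of leading zeros and $j$ for the number of trailing zeros of $s$, this means $i + j \ge z$ while $i \le z-1$ and $j \le z-1$ (otherwise a leading or trailing block of $z$ zeros would itself be a linear run); in particular both $i \ge 1$ and $j \ge 1$. Hence the block of $i+j$ zeros straddling the seam contains a window of exactly $z$ cyclically consecutive zeros that includes the seam.

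It then suffices to count, via a union bound, the sequences admitting such an all-zero seam-window. There are only $z-1$ windows of $z$ consecutive positions that wrap around the seam (those with start position in $\{\ell-z+1,\dots,\ell-1\}$), and for each the requirement that all $z$ of its entries vanish leaves the remaining $\ell - z$ positions free. Consequently
\[
0 \le F_{\ell+2}^{(z)} - \nu_a \le (z-1)\,2^{\ell-z},
\]
so that $\bigl(F_{\ell+2}^{(z)} - \nu_a\bigr)/2^\ell \le (z-1)2^{-z} = (a-1)2^{-(a-1)}$, which tends to $0$ as $a \to \infty$. Combining this with $\lim_{a\to\infty}\nu_a/2^\ell = 1/e$ yields $\lim_{a\to\infty} F_{\ell+2}^{(z)}/2^\ell = 1/e$, as claimed.

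I do not expect any serious obstacle here: the argument is essentially a reduction to the already-proven cyclic statement. The only point requiring a little care is verifying that the sequences accounting for the discrepancy are precisely the ``wrap-around'' ones and that their number is $O(z\,2^{\ell-z})$, i.e.\ exponentially smaller than $2^\ell$; everything else is bookkeeping and a one-line squeeze.
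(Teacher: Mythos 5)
Your argument is correct and follows essentially the same route as the paper: reduce to Theorem~\ref{thm:cyclic_zeros} by bounding $F_{\ell+2}^{(z)}-\nu_a$ (the paper bounds the wrap-around sequences by noting each must start or end with at least $\lceil a/2\rceil$ zeros, giving $2\cdot 2^{\ell-\lceil a/2\rceil}$, while your union bound over the $z-1$ seam-straddling windows gives the slightly sharper $(z-1)2^{\ell-z}$; either suffices). The only blemish is the trivial slip $(z-1)2^{-z}=(a-2)2^{-(a-1)}$, not $(a-1)2^{-(a-1)}$, which does not affect the limit.
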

\begin{proof}
Recall that $\nu_a$ is the number of binary sequences of length $\ell$ that do not contain any cyclic runs of $z$ or more consecutive zeros. Also, $F_{\ell+2}^{(z)}$ is the number of binary sequences of length $\ell$ that do not contain any  runs of $z$ or more consecutive zeros. Clearly \[\nu_a\leq F_{\ell+2}^{(z)}.\] A sequence with no (non-cyclic) runs of $z$ or more consecutive zeros, but which contains a cyclic run of $z$ or more zeros, must either start or end with at least $\lceil a/2\rceil$ zeros. Hence $F_{\ell+2}^{(z)} -\nu_a\leq 2(2^{\ell-\lceil a/2\rceil})$. Hence
\[
2^{-\ell}\left|\nu_a-F_{\ell+2}^{(z)} \right|\leq 2^{1-\lceil a/2\rceil}.
\]
Since the right hand side of this inequality tends to $0$ as $a\rightarrow\infty$, the corollary follows by Theorem~\ref{thm:cyclic_zeros}.
\end{proof}

We can now prove the following asymptotic lower bound on $C(k,n)$.
 \begin{Theorem}
 \[\limsup_{k \rightarrow \infty} \frac{C(k,n)}{2^n} \geq \frac{1}{ek} .\]
 \end{Theorem}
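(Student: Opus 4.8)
The plan is to invoke the Zero Block Construction along a carefully chosen infinite sequence of values of $k$ for which Corollary~\ref{cor:noncyclic_zeros} pins down the relevant count of admissible suffixes exactly. Recall from Theorem~\ref{zeroblock.thm} that for each $z$ with $1\le z\le k-1$ the construction produces a code of size $F_{k+1}^{(z)}\times 2^{n-k-z}$, and that (via Lemma~\ref{seq.lem}) the factor $F_{k+1}^{(z)}$ is precisely the number of admissible suffixes $s\in\SSS$. Hence
\[
\frac{C(k,n)}{2^n}\;\ge\;F_{k+1}^{(z)}\,2^{-k-z}.
\]
The whole idea is to couple $k$ and $z$ so that the index $k+1$ of this Fibonacci-type number coincides with the index $\ell+2$ appearing in Corollary~\ref{cor:noncyclic_zeros}, where $\ell=2^{a}$ and $z=a-1$.

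First I would fix an integer $a\ge 2$ and set $\ell=2^{a}$, $z=a-1$, and $k=\ell+1=2^{a}+1$. With this choice $1\le z\le k-1$, so the Zero Block Construction is valid; moreover $k+1=\ell+2$ and $2^{z+1}=\ell$, so the hypotheses of Corollary~\ref{cor:noncyclic_zeros} hold and $F_{k+1}^{(z)}=F_{\ell+2}^{(z)}$.

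Next I would substitute and simplify. Since $\ell-k=-1$ and $2^{-1-z}=2^{-a}=1/\ell=1/(k-1)$, we obtain
\[
\frac{C(k,n)}{2^n}\;\ge\;\frac{F_{\ell+2}^{(z)}}{2^{\ell}}\cdot 2^{\ell-k-z}
=\frac{F_{\ell+2}^{(z)}}{2^{\ell}}\cdot\frac{1}{k-1},
\]
and therefore
\[
k\cdot\frac{C(k,n)}{2^n}\;\ge\;\frac{k}{k-1}\cdot\frac{F_{\ell+2}^{(z)}}{2^{\ell}}.
\]
Letting $a\to\infty$ (so that $k=2^{a}+1\to\infty$), Corollary~\ref{cor:noncyclic_zeros} gives $F_{\ell+2}^{(z)}/2^{\ell}\to 1/e$ while $k/(k-1)\to 1$, so the right-hand side tends to $1/e$. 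As the values $k=2^{a}+1$ form a strictly increasing sequence tending to infinity, this establishes $\limsup_{k\to\infty} k\cdot C(k,n)/2^{n}\ge 1/e$, which is exactly the claimed bound.

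The only real obstacle is bookkeeping: one must align the index $k+1$ coming from the construction with the index $\ell+2$ of Corollary~\ref{cor:noncyclic_zeros} while respecting the rigid constraint $2^{z+1}=\ell$ forced there. This constraint leaves a single admissible coupling, namely $z=a-1$, $\ell=2^{a}$, $k=2^{a}+1$, after which everything reduces to a substitution. The remaining point to note is that the asymptotic need only be attained along the subsequence $k=2^{a}+1$, which is precisely what the $\limsup$ in the statement permits.
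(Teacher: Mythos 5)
Your proposal is correct and follows essentially the same route as the paper's own proof: the identical coupling $k=2^a+1$, $z=a-1$, $\ell=2^a$ in the Zero Block Construction bound of Theorem~\ref{zeroblock.thm}, followed by Corollary~\ref{cor:noncyclic_zeros} to extract the factor $1/e$ along the subsequence $k=2^a+1$. The only difference is cosmetic bookkeeping (you carry the factor $k/(k-1)$ explicitly rather than ending with $1/(e(k-1))>1/(ek)$).
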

 \begin{proof}
 Take $k = \ell+1 = 2^a+1$ in equation (\ref{zeroblock.eq}) from Theorem \ref{zeroblock.thm}. 
Since $z = a-1$, we now have $2^{z+1} = k-1 = \ell$. Then Theorem \ref{zeroblock.thm} says that
\begin{align*}
\limsup _{k \rightarrow \infty}& \frac{C(k,n)}{2^n} \\
&\geq
 \lim _{k \rightarrow \infty} \frac{F_{k+1}^{(z)}}{2^{k+z}}\\
&= \lim _{\ell \rightarrow \infty} \frac{F_{\ell+2}^{(z)}}{2^{\ell+1+z}}\\
&= 
\frac{1}{e\,2^{z+1}} 
\quad\text{from Corollary \ref{cor:noncyclic_zeros}}\\
&  = \frac{1}{e(k-1)}\\
 &> \frac{1}{ek}.\qedhere
 \end{align*}
 \end{proof}

\begin{table*}[t]
\caption{Comparison of lower bounds obtained from the \textsf{$m$-minimum Construction} and the \textsf{Zero Block Construction}}
\label{tab3}
\[\begin{array}{r@{\hspace{2em}}r@{}l@{\hspace{6em}}r@{}l@{\hspace{6em}}c}
\hline
k
& \multicolumn{2}{@{}c@{}}{\textsf{$m$-minimum Construction}}   
& \multicolumn{2}{@{}c@{}}{\textsf{Zero Block Construction}}
& \text{optimal value of $z$}\\
\hline
  2 &  2&\times 2^{n-4} & 2&\times 2^{n-4} & 1\\
  3 &  6&\times 2^{n-6} & 6&\times 2^{n-6} & 2\\
  4 &  20&\times 2^{n-8} & 20&\times 2^{n-8} & 2\\
  5 &   64&\times 2^{n-10} & 64&\times 2^{n-10} & 2\\
  6 &   216&\times 2^{n-12} & 208&\times 2^{n-12} & 2\\
  7 &   744&\times 2^{n-14} & 704&\times 2^{n-14} & 3\\
  8 &   2640&\times 2^{n-16}\ & 2592&\times 2^{n-16} & 3\\
  9 &   9536&\times 2^{n-18} & 9536&\times 2^{n-18} & 3\\
 10 &   35072&\times 2^{n-20} & 35072&\times 2^{n-20} & 3\\
 11 &   129024&\times 2^{n-22} & 129024&\times 2^{n-22} & 3\\
 12 &   474624&\times 2^{n-24} & 474624&\times 2^{n-24} & 3\\
 13 &   1750080&\times 2^{n-26}& 1745920&\times 2^{n-26} & 3\\
 14 &  6530048&\times 2^{n-28} & 6422528&\times 2^{n-28} & 3\\
\hline
\end{array}
\]
\end{table*}

\begin{table*}[t]
\begin{center}
\[
\begin{array}{
r
r@{}l
r@{}l
r@{}l
|
r@{}l
}
\hline
k&
\multicolumn{2}{@{}c@{}}{\text{Doubling}}&
\multicolumn{2}{@{}c@{}}{\text{$m$-minimum}}&
\multicolumn{2}{@{}c@{}}{\text{Zero block}}&
\multicolumn{2}{@{}c@{}}{\text{Upper bound}}\\
\hline
2 &  \mathbf{2}&\mathbf{\times 2^{n-4}}&  \mathbf{2}&\mathbf{\times 2^{n-4}} &  \mathbf{2}&\mathbf{\times 2^{n-4}}&\mathbf{2}&\mathbf{\times 2^{n-4}}\\
3 &   \mathbf{6}&\mathbf{\times 2^{n-6}}& \mathbf{6}&\mathbf{\times 2^{n-6}} &  \mathbf{6}&\mathbf{\times 2^{n-6}}&\mathbf{6}&\mathbf{\times 2^{n-6}}\\
4 &   \mathbf{20}&\mathbf{\times 2^{n-8}} & \mathbf{20}&\mathbf{\times 2^{n-8}} &  \mathbf{20}&\mathbf{\times 2^{n-8}}&\mathbf{20}&\mathbf{\times 2^{n-8}} \\
5 &  \mathbf{64}&\mathbf{\times 2^{n-10}}&   \mathbf{64}&\mathbf{\times 2^{n-10}} &  \mathbf{64}&\mathbf{\times 2^{n-10}}&\mathbf{64}&\mathbf{\times 2^{n-10}}\\
6 & 210&\times 2^{n-12}&  \mathbf{ 216}&\mathbf{\times 2^{n-12}} & 208&\times 2^{n-12}&\mathbf{216}&\mathbf{\times 2^{n-12}}\\
7 &  702&\times 2^{n-14} &  \mathbf{744}&\mathbf{\times 2^{n-14}} & 704&\times 2^{n-14} &1170.3&\times 2^{n-14}\\
8 &  2500&\times 2^{n-16}&  \mathbf{2640}&\mathbf{\times 2^{n-16}} & 2592&\times 2^{n-16} &4096&\times 2^{n-16}\\
9 & 8836&\times 2^{n-18}&   \mathbf{9536}&\mathbf{\times 2^{n-18}} &  \mathbf{9536}&\mathbf{\times 2^{n-18}} &14563.6&\times 2^{n-18}\\
10 & 32220&\times 2^{n-20}&   \mathbf{35072}&\mathbf{\times 2^{n-20}} &  \mathbf{35072}&\mathbf{\times 2^{n-20}} &52428.8&\times 2^{n-20}\\
11 & 117649&\times 2^{n-22}&   \mathbf{129024}&\mathbf{\times 2^{n-22}} &  \mathbf{129024}&\mathbf{\times 2^{n-22}} &190650.2&\times 2^{n-22}\\
12 & 434281&\times 2^{n-24}&   \mathbf{474624}&\mathbf{\times 2^{n-24}} &  \mathbf{474624}&\mathbf{\times 2^{n-24}} &699050.7&\times 2^{n-24}\\
13 & 1604022&\times 2^{n-26}&   \mathbf{1750080}&\mathbf{\times 2^{n-26}}& 1745920&\times 2^{n-26} &2581110.2&\times 2^{n-26}\\
14 & 5973136&\times 2^{n-28}&  \mathbf{6530048}&\mathbf{\times 2^{n-28}} & 6422528&\times 2^{n-28} &9586080.6&\times 2^{n-28}\\
\hline
\end{array}
\]
\end{center}
\caption{A summary of the largest $(1,k)$-overlap free codes obtained by our three constructions. Bold font indicates the best of the three constructions, or a tight upper bound.}
\label{tab:combined}
\end{table*}

Finally, it is perhaps also of interest to compute the exact size of the codes obtained from the the \textsf{Zero Block Construction} for ``small'' values of $k$. We use the formula (\ref{zeroblock.eq}) from Theorem \ref{zeroblock.thm}. For a fixed ``small'' value of $k$, we choose 
$z$ to maximize 
$F_{k+1}^{(z)} \times 2^{-z}$. This is easily done by iterating through the possible values of $z$ to see which one gives the largest result. The exact values $F_{k+1}^{(z)}$ are computed very quickly from the recurrence relation (\ref{fib.eq}).

We present some data in Table \ref{tab3} comparing the \textsf{Zero Block Construction} to the \textsf{$m$-minimum Construction}. For the \textsf{Zero Block Construction}, we also include the optimal value of $z$. Table~\ref{tab:combined} provides a summary of the constructions and bounds in this paper. It is interesting to observe that the \textsf{Zero Block Construction} performs almost as well as the \textsf{$m$-minimum Construction} in all cases, and it gives the same result in many cases. However, the computations of the bounds for the \textsf{Zero Block Construction} are amazingly fast. For example, it is almost instantaneous to compute the lower bound 
\begin{center}
\begin{tabular}{@{}c@{}}
\multicolumn{1}{@{}l}{5745596237141382}\\
\multicolumn{1}{@{\hspace*{3em}}c@{\hspace*{3em}}}{785608786499535716424326}\\
\multicolumn{1}{r@{}}{792561835200479232 $\times 2^{n-200}$}
\end{tabular}
\end{center}

\section{Non-overlapping codes}
\label{classical.sec}

We can apply the techniques of Section \ref{simon.const} to the construction of ``classic'' non-overlapping codes.
Again, we restrict our attention to the binary case.
The following construction is due to Gilbert and Levenshtein; it has been re-discovered several times, and is used in many applications.
See \cite{Chee,Gil60,Lev64,Lev70,Levy}.

\begin{Construction}[\textsf{Gilbert--Levenshtein Construction}]
Suppose $n$ is a given positive integer. For $z=1 , \dots ,  k-1$, we construct a code $L_z$ as follows:
\begin{itemize}
\item each codeword $c = (c_1, \dots , c_n) \in L_z$ begins with a block of $z$ consecutive $0$'s,
\item $c_{z+1} = c_n = 1$, and
\item the sequence $(c_{z+1}, \dots , c_{n-1})$ does not contain $z$ consecutive $0$'s.
\end{itemize}
\end{Construction}
It is clear that $|L_z|$ equals the number of binary sequences of length $n - z - 2$ that do not contain $z$ consecutive $0$'s.
Hence, from Lemma \ref{seq.lem}, we have the following.
\begin{Lemma}
$|L_z| = F_{n-z}^{(z)}$.
\end{Lemma}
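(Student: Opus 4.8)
The plan is to reduce the claim to a direct application of Lemma~\ref{seq.lem}, after pinning down precisely which coordinates of a codeword are free and what constraint they must satisfy. First I would set up the natural bijection between codewords $c=(c_1,\dots,c_n)\in L_z$ and their ``interior'' coordinates. By definition the positions $c_1=\cdots=c_z=0$, together with $c_{z+1}=1$ and $c_n=1$, are all forced; the only unconstrained coordinates are $c_{z+2},\dots,c_{n-1}$, which form a binary sequence of length $(n-1)-(z+2)+1=n-z-2$. Thus a codeword is determined by, and determines, this interior sequence, so $|L_z|$ equals the number of admissible interior sequences of length $n-z-2$.

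The one point that deserves an honest (if short) argument is that the run constraint on $(c_{z+1},\dots,c_{n-1})$ transfers \emph{exactly} to the interior sequence $(c_{z+2},\dots,c_{n-1})$. Since $c_{z+1}=1$, no window of $z$ consecutive zeros inside $(c_{z+1},\dots,c_{n-1})$ can use position $z+1$; hence every such run lies entirely within $(c_{z+2},\dots,c_{n-1})$, and conversely any $z$-run of zeros in the interior sequence is a $z$-run in the longer sequence. So ``$(c_{z+1},\dots,c_{n-1})$ contains no $z$ consecutive $0$'s'' holds if and only if ``$(c_{z+2},\dots,c_{n-1})$ contains no $z$ consecutive $0$'s.'' This confirms the sentence preceding the statement: $|L_z|$ is exactly the number of binary sequences of length $n-z-2$ avoiding $z$ consecutive zeros.

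Finally I would invoke Lemma~\ref{seq.lem}, which states that the number of binary $\ell$-tuples avoiding $z$ consecutive zeros is $F_{\ell+2}^{(z)}$. Substituting $\ell=n-z-2$ gives $F_{(n-z-2)+2}^{(z)}=F_{n-z}^{(z)}$, as required. No substantial obstacle is expected here: the only thing to take care with is the index bookkeeping (the length is $n-z-2$, not $n-z-1$, because both $c_{z+1}$ and $c_n$ are fixed to $1$) and the elementary verification that fixing $c_{z+1}=1$ decouples the leading forced block from the interior run condition.
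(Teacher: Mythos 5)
Your proof is correct and follows exactly the route the paper takes: it observes that $|L_z|$ equals the number of binary sequences of length $n-z-2$ with no $z$ consecutive zeros (the point the paper dismisses as ``clear'') and then applies Lemma~\ref{seq.lem} with $\ell=n-z-2$. You simply make explicit the index bookkeeping and the fact that $c_{z+1}=1$ decouples the forced zero block from the interior run condition, which the paper leaves to the reader.
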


Of course we would choose $z$ to maximize $|L_z|$. Let $S(n)$ denote the size of the code obtained from the \textsf{Gilbert--Levenshtein Construction}. The following result is immediate.

\begin{Theorem}
\label{lev.thm}
\begin{equation}
\label{Lev.eq}
S(n) =  \max \left\{ F_{n-z}^{(z)}  : 1 \leq z \leq n-1 \right\} .
\end{equation}
\end{Theorem}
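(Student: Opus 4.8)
The plan is to treat the theorem as a direct bookkeeping consequence of the preceding lemma together with the definition of $S(n)$. First I would recall that for each admissible index $z$ the \textsf{Gilbert--Levenshtein Construction} produces a non-overlapping code $L_z$; the verification that each $L_z$ is genuinely non-overlapping is the classical content of the construction (cited above) and I would not reprove it here. Since $S(n)$ is \emph{defined} to be the size of the code one actually builds, and since one is free to choose $z$ so as to make that code as large as possible, it follows immediately from the definition that $S(n) = \max\{\,|L_z| : 1 \leq z \leq n-1\,\}$.

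Next I would substitute the exact count of $|L_z|$. The immediately preceding lemma asserts $|L_z| = F_{n-z}^{(z)}$, which in turn rests on Lemma \ref{seq.lem}: the free coordinates of a codeword in $L_z$ are exactly the positions $c_{z+2}, \dots, c_{n-1}$, forming a binary sequence of length $n-z-2$, and the run condition restricts these to sequences avoiding $z$ consecutive $0$'s (the forced $1$ in position $z+1$ cannot lie inside such a run, so imposing the condition on $(c_{z+1},\dots,c_{n-1})$ is equivalent to imposing it on the free tail). By Lemma \ref{seq.lem} with $\ell = n-z-2$, the number of such sequences is $F_{\ell+2}^{(z)} = F_{n-z}^{(z)}$. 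Plugging this into the maximum above yields $S(n) = \max\{F_{n-z}^{(z)} : 1 \leq z \leq n-1\}$, which is precisely (\ref{Lev.eq}).

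I expect no genuine obstacle, since the two substantive ingredients---that the construction yields a valid non-overlapping code and that the codewords are enumerated by a $z$-step Fibonacci number---are already in hand from the cited classical result and from Lemma \ref{seq.lem}. The only point worth a sentence of care is that the index sets match on both sides: the construction ranges over $1 \leq z \leq n-1$, which is exactly the index set of the maximum in (\ref{Lev.eq}), so the equality holds term by term and the result is indeed immediate.
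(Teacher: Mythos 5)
Your proposal is correct and matches the paper exactly: the paper states Theorem~\ref{lev.thm} as immediate from the definition of $S(n)$ (the size obtained by choosing the best $z$) together with the preceding lemma $|L_z| = F_{n-z}^{(z)}$, which is itself derived from Lemma~\ref{seq.lem} just as you describe. There is nothing further to compare.
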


We note that the connection between the \textsf{Gilbert--Levenshtein Construction} and the $n$-step Fibonacci numbers was pointed out by Chee {\it et al.}  \cite{Chee}. In fact, the entries in the third column of \cite[Table 1]{Chee} are computed using the formula (\ref{Lev.eq}).

We can use the techniques developed in  Section  \ref{simon.const} to give an explicit, non-asymptotic lower bound on $S(n)$.

\begin{Lemma}
\[F_{n-z}^{(z)} > (1-n\, 2^{-z-1}) 2^{n-z-2}.\]
\end{Lemma}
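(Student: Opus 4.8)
The plan is to recognize $F_{n-z}^{(z)}$ combinatorially and then rerun the probabilistic union-bound argument from Lemma~\ref{events2.lem}, reindexed for the present length. By Lemma~\ref{seq.lem}, $F_{n-z}^{(z)}$ is exactly the number of binary sequences of length $m:=n-z-2$ that contain no run of $z$ consecutive $0$'s. So it suffices to show that a uniformly random word $t\in\{0,1\}^m$ avoids $0^z$ with probability strictly greater than $1-n\,2^{-z-1}$, and then to multiply by $2^m=2^{n-z-2}$.

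First I would set up the ``bad'' events exactly as in Lemma~\ref{events2.lem}. Let $E_1$ be the event that the first $z$ symbols of $t$ are all $0$, and for $2\le i\le m-z+1$ let $E_i$ be the event that position $i-1$ equals $1$ and positions $i,\dots,i+z-1$ are all $0$. The point of this choice is the ``first occurrence'' observation: if $t$ contains $0^z$, then the first such run either starts at position $1$ (giving $E_1$) or is immediately preceded by a $1$ (giving some $E_i$ with $i\ge 2$); hence avoiding $0^z$ is equivalent to avoiding all of $E_1,\dots,E_{m-z+1}$.

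Next I would estimate the probabilities and apply a union bound. Since $\Pro[E_1]=2^{-z}$ and $\Pro[E_i]=2^{-z-1}$ for each of the $m-z$ values $i\ge 2$, we get
\[
\Pro[E_1\vee\cdots\vee E_{m-z+1}]\le 2^{-z}+(m-z)2^{-z-1}=2^{-z-1}(m-z+2).
\]
Substituting $m=n-z-2$ collapses $m-z+2$ to $n-2z$, so the bad probability is at most $(n-2z)2^{-z-1}$. Complementing gives $\Pro[\,t \text{ avoids } 0^z\,]\ge 1-(n-2z)2^{-z-1}$, and since $z\ge 1$ forces $n-2z<n$, this is strictly larger than $1-n\,2^{-z-1}$. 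Multiplying through by $2^{m}=2^{n-z-2}$ yields the claimed strict bound.

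The argument is essentially routine once the indices are matched, so the step needing the most care is the bookkeeping: confirming the count $m-z$ of the events $E_i$ with $i\ge2$, the simplification $m-z+2=n-2z$, and the observation that $n-2z<n$ \emph{strictly} (not merely $\le$) in order to secure strict inequality. I would also dispatch the degenerate range $z>(n-2)/2$, where $m<z$ and no length-$m$ word can contain $0^z$; there $F_{n-z}^{(z)}=2^{n-z-2}$ and the claim is immediate. Finally, I would remark that the whole statement follows even faster by applying Lemma~\ref{events2.lem} directly with $k=n-z-1$, since then $F_{n-z}^{(z)}=F_{k+1}^{(z)}\ge(1-(n-z-1)2^{-z-1})2^{n-z-2}$ and $n-z-1<n$.
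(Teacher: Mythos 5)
Your proposal is correct and is essentially the paper's argument: the paper's proof is precisely your closing remark, namely applying Lemma~\ref{events2.lem} with $k=n-z-1$ and noting $n-z-1<n$. The main body of your write-up simply re-derives Lemma~\ref{events2.lem} with the indices shifted (yielding the marginally sharper constant $n-2z$ in place of $n-z-1$), so it is the same union-bound technique rather than a genuinely different route.
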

\begin{proof}
If we take $k = n - z - 1$ in Lemma \ref{events2.lem}, we obtain
 \[ F_{n-z}^{(z)} \geq (1-(n-z-1)\, 2^{-z-1}) 2^{n-z-2}.\] 
 Clearly,
 \[ 1-(n-z-1)\, 2^{-z-1} > 1-n\, 2^{-z-1},\] so the stated bound follows.
\end{proof}

We now choose $z$ to maximize the function
$h(z) = (1-n\, 2^{-z-1}) 2^{-z-2}$. The maximum occurs when  $z=\log_2 k$, which of course might not be an integer.
Choose $z$ to be an integer in the interval $\left[ \log_2 \frac{3n}{4}, \log_2 \frac{3n}{2} \right].$ 
Then we have \[ h\left(\log_2 \frac{3n}{4}\right) = h\left(\log_2 \frac{3n}{2}\right) = \frac{1}{9n} ,\]
and we obtain the following theorem.

\begin{Theorem}
\label{classic.thm}
$S(n)>(1/9n){2^n}$.
\end{Theorem}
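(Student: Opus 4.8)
The plan is to combine the explicit lower bound on $F_{n-z}^{(z)}$ just established with a clever choice of the zero-block length $z$, exactly mirroring the argument used for Theorem \ref{gen2.thm}. First I would recall that by Theorem \ref{lev.thm}, $S(n) = \max\{F_{n-z}^{(z)} : 1 \leq z \leq n-1\}$, so to lower-bound $S(n)$ it suffices to produce a single value of $z$ for which $F_{n-z}^{(z)}$ is large. The preceding lemma gives $F_{n-z}^{(z)} > h(z)\,2^n$, where $h(z) = (1-n\,2^{-z-1})2^{-z-2}$, so the task reduces to showing that $h(z) \geq 1/(9n)$ for some integer $z$ in the admissible range.

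Next I would analyze the function $h(z)$ as a function of a real variable. Writing $u = 2^{-z}$, we have $h = \tfrac{1}{4}u(1 - \tfrac{n}{2}u)$, a downward parabola in $u$ maximized at $u = 1/n$, i.e., at $z = \log_2 n$. Since $\log_2 n$ need not be an integer, I would follow the second tweak used earlier: rather than centering an interval of length $1$ at the maximizer, I choose the interval $[\log_2 \tfrac{3n}{4}, \log_2 \tfrac{3n}{2}]$, which has length $\log_2 2 = 1$ and therefore contains an integer. The key algebraic fact is that $h$ takes the \emph{same} value at both endpoints of this interval; evaluating at either endpoint (where $2^{-z}$ equals $\tfrac{4}{3n}$ or $\tfrac{2}{3n}$ respectively) yields $h = \tfrac{1}{9n}$ after a short computation. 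Because $h$ is concave on the relevant region and the two endpoints give equal values, every point of the interval — in particular the integer $z$ it contains — satisfies $h(z) \geq \tfrac{1}{9n}$.

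Finally I would assemble the pieces: choosing this integer $z$, I obtain $S(n) \geq F_{n-z}^{(z)} > h(z)\,2^n \geq \tfrac{1}{9n}\,2^n$, which is the claimed bound. I should also verify that the chosen $z$ lies in the valid range $1 \leq z \leq n-1$ and that $k = n-z-1 \geq 1$, so that the preceding lemma (and through it Lemma \ref{events2.lem}) genuinely applies; for all sufficiently large $n$ this is automatic since $z \approx \log_2 n$ is much smaller than $n$. The main obstacle, such as it is, is purely computational: confirming the clean identity $h(\log_2 \tfrac{3n}{4}) = h(\log_2 \tfrac{3n}{2}) = \tfrac{1}{9n}$ and checking the concavity/endpoint argument that transfers this value to the interior integer point. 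No genuine difficulty arises because the structure of the argument is identical to that of Theorem \ref{gen2.thm}, with $h$ playing the role that $g$ played there.
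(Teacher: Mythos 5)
Your proposal is correct and follows essentially the same route as the paper: combine Theorem~\ref{lev.thm} with the bound $F_{n-z}^{(z)} > h(z)\,2^n$ from the preceding lemma, note that $h\left(\log_2 \tfrac{3n}{4}\right) = h\left(\log_2 \tfrac{3n}{2}\right) = \tfrac{1}{9n}$, and take the integer $z$ in that length-one interval, where concavity of the parabola in $u = 2^{-z}$ gives $h(z) \geq \tfrac{1}{9n}$. The only difference is that you make explicit the concavity/endpoint argument and the check that $z$ lies in the admissible range, both of which the paper leaves implicit.
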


When $n$ is a power of $2$, the maximum value of $h(z)$ occurs when $z = \log_2 n$, and so we do slightly better:
\begin{Theorem}
\label{classic2.thm}
If $n$ is a power of two, then $S(n)>(1/8n){2^n}$.
\end{Theorem}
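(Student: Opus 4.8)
The plan is to mirror the proof of Theorem~\ref{classic.thm}, but to exploit the fact that when $n$ is a power of two we can choose $z=\log_2 n$ exactly, so that the maximizer of $h(z)$ is an integer and no rounding loss occurs. First I would recall from the preceding lemma that for any admissible $z$ we have $F_{n-z}^{(z)}>(1-n\,2^{-z-1})2^{n-z-2}=h(z)\,2^n$, where $h(z)=(1-n\,2^{-z-1})2^{-z-2}$. By Theorem~\ref{lev.thm}, $S(n)=\max_z F_{n-z}^{(z)}$, so it suffices to exhibit a single integer $z$ in the admissible range $1\leq z\leq n-1$ for which $h(z)$ is large enough.

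Next I would compute the unconstrained maximum of $h$. Writing $u=2^{-z}$, we have $h=\tfrac14 u(1-\tfrac{n}{2}u)=\tfrac14(u-\tfrac{n}{2}u^2)$, a downward parabola in $u$ maximized at $u=1/n$, i.e.\ at $z=\log_2 n$. Since $n$ is a power of two by hypothesis, $z_0:=\log_2 n$ is a positive integer, and (assuming $n\geq 2$) it lies in the admissible range $1\leq z_0\leq n-1$. Substituting $u=1/n$ gives
\begin{align*}
h(z_0)&=\tfrac14\left(\tfrac1n-\tfrac{n}{2}\cdot\tfrac1{n^2}\right)
=\tfrac14\left(\tfrac1n-\tfrac1{2n}\right)
=\tfrac14\cdot\tfrac1{2n}
=\tfrac{1}{8n}.
\end{align*}
Therefore $S(n)\geq F_{n-z_0}^{(z_0)}>h(z_0)\,2^n=(1/8n)2^n$, which is the claimed bound.

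The argument is essentially a routine specialization, so there is no deep obstacle; the only thing to verify carefully is that the chosen $z_0$ is genuinely admissible, i.e.\ that $1\leq z_0\leq n-1$. For $n=2^i$ with $i\geq1$ we have $z_0=i\geq1$, and $z_0\leq n-1$ holds since $i\leq 2^i-1$ for all $i\geq1$; the degenerate case $n=1$ is excluded since $h$ would have no admissible $z$. One should also note that the strict inequality in the lemma ($F_{n-z}^{(z)}>(1-n\,2^{-z-1})2^{n-z-2}$) is what upgrades the bound to a strict inequality $S(n)>(1/8n)2^n$, exactly as in Theorem~\ref{classic.thm}. In short, the power-of-two hypothesis removes the interval-rounding step used in the general case and lets us evaluate $h$ at its exact optimum, yielding the improved constant $1/8$ in place of $1/9$.
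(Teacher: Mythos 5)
Your proposal is correct and takes essentially the same route as the paper: the paper's proof of this theorem is exactly the observation that when $n$ is a power of two the maximizer $z=\log_2 n$ of $h(z)=(1-n\,2^{-z-1})2^{-z-2}$ is an integer, so one can evaluate $h$ at its exact optimum and obtain $h(\log_2 n)=1/(8n)$, whence $S(n)\geq F_{n-z}^{(z)}>(1/8n)2^n$. Your additional check that $z_0=\log_2 n$ lies in the admissible range is a harmless detail the paper leaves implicit.
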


These bounds improve previous explicit bounds. In Bilotta, Pergola and  Pinzani \cite{BPP12}, an explicit construction based on Dyck paths was given. However, it was observed by Chee {\it et al.} \cite{Chee} that this construction does not yield a lower bound of the form $S(n)>(c/n){2^n}$ for any constant $c > 0$. Also, Blackburn \cite{Bl} proved that $S(n) > (3/64n){2^n}$; our lower bound from Theorem \ref{classic.thm} is stronger. 

As far as asymptotic bounds are concerned,
Levenshtein \cite{Lev64} proved that 
\[ \limsup_{n \rightarrow \infty} S(n) 
\geq (1/2en){2^n} \approx (1/5.436n){2^n}.\]
Levenshtein's asymptotic bound also follows easily from Corollary \ref{cor:noncyclic_zeros} and Theorem~\ref{lev.thm}, as we now demonstrate.

\begin{Theorem}
$\displaystyle\limsup_{n \rightarrow \infty} S(n) 
\geq (1/2en){2^n}.$
\end{Theorem}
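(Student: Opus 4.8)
The plan is to mirror the asymptotic lower bound for $C(k,n)$ proved just above, substituting the closed form $S(n) = \max_z F_{n-z}^{(z)}$ from Theorem~\ref{lev.thm} in place of the Zero Block expression. The whole argument rests on Corollary~\ref{cor:noncyclic_zeros}, which pins down $F_{\ell+2}^{(z)}$ precisely in the regime $\ell = 2^a$, $z = a-1$. So the first task is to align the index $n-z$ appearing in Theorem~\ref{lev.thm} with the index $\ell+2$ appearing in the Corollary.

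First I would fix a positive integer $a$, set $\ell = 2^a$ and $z = a-1$ (so that $2^{z+1} = \ell$, exactly the regime of Corollary~\ref{cor:noncyclic_zeros}), and then choose $n$ so that $n - z = \ell + 2$; this forces $n = 2^a + a + 1$. For such $n$ the value $z = a-1$ lies in the admissible range $1 \le z \le n-1$ (for $a \geq 2$), so Theorem~\ref{lev.thm} immediately gives $S(n) \geq F_{n-z}^{(z)} = F_{\ell+2}^{(z)}$.

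Next I would carry out the bookkeeping on the exponent. Since $n = \ell + a + 1$, we have $2^n = 2^\ell \cdot 2^{a+1}$, and therefore
\[
\frac{S(n)}{2^n} \geq \frac{F_{\ell+2}^{(z)}}{2^\ell}\cdot \frac{1}{2^{a+1}}.
\]
Letting $a\to\infty$, Corollary~\ref{cor:noncyclic_zeros} gives $F_{\ell+2}^{(z)}/2^\ell \to 1/e$. It remains to rewrite $2^{a+1} = 2\cdot 2^a = 2\ell = 2(n - a - 1)$ in terms of $n$; since $a = \log_2 \ell \sim \log_2 n$, we have $n - a - 1 = n(1+o(1))$, whence $S(n)/2^n \geq \tfrac{1}{2en}(1+o(1))$ along the subsequence $n = 2^a + a + 1$. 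Equivalently, $\limsup_{n\to\infty} nS(n)/2^n \geq 1/(2e)$, which is the asserted bound.

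The argument is essentially a matter of matching indices and is not difficult; the only point requiring genuine care is the final conversion of $2^{a+1}$ into a clean factor of $2n$, that is, verifying that the additive discrepancy $a+1$ between $n$ and $\ell = 2^a$ is negligible ($o(n)$) so that the constant $1/(2e)$ emerges rather than a spurious $1/(2e\ell)$. I would also note explicitly that $\limsup$ here is taken along the chosen subsequence of values of $n$, exactly as in the preceding theorem for $C(k,n)$.
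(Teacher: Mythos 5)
Your proposal is correct and follows essentially the same route as the paper's own proof: both choose $\ell=2^a$, $z=a-1$, $n=\ell+a+1$ so that $n-z=\ell+2$, invoke Corollary~\ref{cor:noncyclic_zeros} to get $F_{\ell+2}^{(z)}/2^\ell\to 1/e$, and then observe that the factor $2^{a+1}=2\ell$ differs from $2n$ only by the negligible additive term $a+1$. No substantive difference.
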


\begin{proof}
We prove that
\[ \limsup_{n \rightarrow \infty} \frac{2n S(n) }{2^n} \geq \frac{1}{e}.\]
From Theorem \ref{lev.thm},
we have
\[ \frac{2n S(n) }{2^n} \geq \frac{2n F_{n-z}^{(z)}}{2^n}\] for any $z$ such that $1 \leq z \leq n-1 $.
Let $\ell = 2^a$, $z = a-1$ and $n = \ell + a + 1$ for a positive integer $a$. Then $n-z$ = $\ell + 2$. For these values of $n$ and $z$, we
compute 
\begin{align*} \frac{2n F_{n-z}^{(z)}}{2^n} &= \frac{2(\ell+a+1)}{2^{a+1}}\times \frac{F_{\ell+2}^{(a-1)}}{2^\ell}\\
&= \frac{\ell + a + 1}{\ell} \times \frac{F_{\ell+2}^{(a-1)}}{2^{\ell}}.
\end{align*}
It is clear that $(\ell + a + 1)/{\ell}$ approaches $1$ as $a \rightarrow \infty$, because $\ell = 2^a$.
Also, from Corollary \ref{cor:noncyclic_zeros}, ${F_{\ell+2}^{(a-1)}}/{2^{\ell}}$ approaches $1/e$ as $a \rightarrow \infty$.
The desired result follows. 
\end{proof}

\section{Discussion and Summary}
\label{summary.sec}
In this paper, we have mainly concentrated on $(1,k)$-overlap-free codes over a binary alphabet. 
Our constructions and bounds are actually quite close. 
There are many possible avenues for future research, including studying variable-length analogs, studying codes over non-binary alphabets, or investigating codes with other forbidden overlaps. One direction that might be fruitful for applications is the investigation of codes which are simultaneously $(1,k)$-overlap-free and $(n-k,n-1)$-overlap-free, where $k<\tfrac{n}{2}$.

The \textsf{Zero Block Construction} is inspired by a classical construction of non-overlap\-ping codes due to Gilbert and Levenshtein. It is surprising to us that the \textsf{$m$-minimum Construction} can sometimes yield better codes.
Here is one specific question relating to these two constructions from Section \ref{m-min.sec}: Do the \textsf{$m$-minimum Construction} and \textsf{Zero Block Construction} give the same bound for infinitely many values of $n$?

Finally, we note that the constructions in Section \ref{classical.sec} are most effective when $n$ is close to a power of two. 
We ask if there are constructions that are asymptotically better when $n$ is not of this form, for example when $n=\lfloor 2^{(a+1)/2}\rfloor$  as $a\rightarrow\infty$?


\begin{IEEEbiographynophoto}{Simon R. Blackburn} (M'12, SM'19) was born in Beverley, Yorkshire, England in 1968. He received a BSc in Mathematics from Bristol in 1989, and a DPhil in Mathematics from Oxford in 1992.

He has worked in the Mathematics Department at Royal Holloway University
of London since 1992, and is currently a Professor of Pure Mathematics. His
research interests include algebra, combinatorics and associated applications in
cryptography and communication theory.
\end{IEEEbiographynophoto}
\vskip -2\baselineskip plus -1fil
\begin{IEEEbiographynophoto}{Navid~Nasr~Esfahani} (M'18)
received the B.Sc. degree from the Isfahan University of Technology, Isfahan, Iran, in 2011, the M.Sc. degree from the University of Manitoba, Winnipeg, MB, Canada, in 2014, and the Ph.D. degree from the Cheriton School of Computer Science, University of Waterloo, Waterloo, ON, Canada in 2021. He then continued his research as a Post-Doctoral Fellow at the University of Waterloo. In 2023, he joined the Department of Computer Science at the Memorial University of Newfoundland, Canada, as an Assistant Professor.

His research interests include cryptography, information theory, information theoretic security,
privacy, and combinatorics.
\end{IEEEbiographynophoto}
\vskip -2\baselineskip plus -1fil
\begin{IEEEbiographynophoto}{Donald~L.~Kreher} 
(born in Albany, New York, U.S.A. in 1955) 
obtained a joint computer science and mathematics Ph.D. from 
the University of Nebraska in 1984 and held academic positions at Rochester 
Institute of Technology from 1984 to 1989, 
the University of Wyoming from 1989 to 1991,
and Michigan Technological University from 1991 to 2020 when he retired 
as an emeritus professor.

In 1995, Professor Kreher was awarded the Marshall Hall Medal, awarded by the 
Institute of Combinatorics and its Applications.

His research interests include computational and algebraic methods for 
determining the structure and existence of combinatorial configurations, such 
as designs, graphs, error-correcting codes, cryptographic systems and extremal 
set systems. 
\end{IEEEbiographynophoto}
\vskip -2\baselineskip plus -1fil
\begin{IEEEbiographynophoto}{Douglas~R.~Stinson}
(born in 1956 in Guelph, Ontario) is a Canadian mathematician and cryptographer, currently Professor Emeritus at the University of Waterloo. Stinson received his B.Math from the University of Waterloo in 1978, his M.Sc. from Ohio State University in 1980, and his Ph.D. from the University of Waterloo in 1981. He was at the University of Manitoba from 1981 to 1989 and the University of Nebraska-Lincoln from 1990 to 1998. Since 1998 he has been at the University of Waterloo, retiring in 2019. Professor Stinson was awarded the 1994 Hall Medal and the 2022 Stanton Medal by the Institute of Combinatorics and its Applications. In 2011, he was named as a Fellow of the Royal Society of Canada. His research interests include combinatorics, cryptography, algorithms and information security.
\end{IEEEbiographynophoto}
\vfill

\end{document}